\documentclass[11pt]{article}
\usepackage[utf8]{inputenc}
\usepackage{graphicx}
\graphicspath{ {images/} }
\usepackage{amsmath,amssymb,amsthm,textcomp}
\usepackage{titling}
\setlength{\droptitle}{-6em}
\usepackage{bbm}
\usepackage{caption}
\captionsetup{textfont={bf},labelfont={bf}}
\usepackage{tikz,pgfplots}
\usepackage[margin=1in]{geometry}
\usepackage{multirow,booktabs}
\setlength\parindent{1em}
\usepackage [english]{babel}
\usepackage [autostyle, english = american]{csquotes}
\MakeOuterQuote{"}
\newtheorem{prop}{Proposition}
\newtheorem{theorem}{Theorem}
\newtheorem{corollary}{Corollary}
\newtheorem{definition}{Definition}
\newtheorem{lemma}[theorem]{Lemma}
\usepackage{setspace}
\usepackage{tikz}
\usetikzlibrary{matrix}
\usepackage{enumitem}
\usepackage{authblk}
\usepackage[hyphens,spaces]{url} 
\providecommand{\keywords}[1]
{
  \small	
  \textbf{\textit{Keywords---}} #1
}
\usepackage[colorlinks,citecolor=black,urlcolor=black,bookmarks=false,hypertexnames=true]{hyperref} 

\usepackage{xcolor}
\usepackage{hyperref}
\makeatletter
\def\p@figure{\color{black}}
\def\p@equation{\color{black}}
\def\p@table{\color{black}}
\def\p@section{\color{black}}
\def\p@subsection{\color{black}}
\def\p@subsubsection{\color{black}}
\def\p@prop{\color{black}}
\def\p@theorem{\color{black}}
\def\p@corollary{\color{black}}
\makeatother

\usepackage{natbib}
 \bibpunct[, ]{(}{)}{,}{a}{}{,}%

\doublespacing 
\title{\textbf{\LARGE{Is More Precise Word of Mouth Better for a High Quality Firm? ... Not Always}}}
\usepackage{geometry}
\geometry{top=1in, left=1in, right=1in, bottom=1in}
\author{\vspace{-5ex}}
\author{
\large{Mohsen Foroughifar}\thanks{Rotman School of Management, University of Toronto, mohsen.foroughifar@rotman.utoronto.ca}
\ \ \
 David Soberman\thanks{Rotman School of Management, University of Toronto, david.soberman@rotman.utoronto.ca \\ We would like to thank Heski Bar-Isaac, Peter Landry, Tanjim Hossain, Zachary Zhong, Dina Mayzlin, Nitin Mehta, Mengze Shi, Matthew Osborne, and Ryan Webb for helpful comments. We would also like to thank all participants of the marketing seminar at the Rotman School of Management and the Marketing Science Conference for feedback.} 
}

\begin{document}

\maketitle
\begin{abstract}

Consumers often resort to third-party information such as word of mouth, testimonials and reviews to learn more about the quality of a new product. However, it may be difficult for consumers to assess the precision of such information. We use a monopoly setting to investigate how the precision of third-party information and consumers’ ability to recognize precision impact firm profits. Conventional wisdom suggests that when a firm is high quality, it should prefer a market where consumers are better at recognizing precise signals. Yet in a broad range of conditions, we show that when the firm is high quality, it is more profitable to sell to consumers who do not recognize precise signals. Given the ability of consumers to assess precision, we show a low quality firm always suffers from more precise information. However, a high quality firm can also suffer from more precise information. The precision range in which a high quality firm gains or suffers from better information depends on how skilled consumers are at recognizing precision.
\\
\\
\keywords {Signalling, Third-Party Information, Information Precision}
\end{abstract}

\newpage
\section{Introduction}
When a new product is introduced, consumers typically have poor information about product quality. Before purchasing, they often obtain information from third-parties such as friends, online spokespeople, and reports from independent experts. If the information is precise, it can alleviate consumer uncertainty about product quality. But if the information is imprecise, it might not be helpful. Practically, it is straightforward for consumers to understand whether information is in favor of a product or against it. However, it is not always clear how accurate the information is, and that might affect its impact on consumer behavior. In Table \ref{table:examples}, we present several examples in which a consumer obtains third party information and might have difficulty evaluating its precision.

To be specific, there are many sources of information which help consumers to assess product quality. As noted above, word of mouth (WOM) from acquaintances, celebrities who publicly speak about products, and social media posts which discuss a recently launched product are good examples. There is certainly research that examines the effect of information from third-parties, yet the impact of how consumers assess the precision of information from these sources is unexplored. 

Information precision often depends on the source that generates the information. If the person who disseminates the information has significant experience, typically the information is more precise than from a person with limited experience. For example, when a reviewer talks about the quality of a product, whether or not his/her information is precise depends on his/her expertise in that specific product category. Imagine a social media post, created by a reviewer, that promotes a newly introduced digital camera. If this person has a lot of experience with digital cameras, then the post might be quite precise. However, if he has limited experience with digital cameras, his information might be less helpful. The precision of information should have a significant impact on how a market functions.

\begin{table}[ht]
\centering
\small
\caption{Examples where consumers might struggle to evaluate third-party information precision}
\label{table:examples} 
\begin{tabular}{|c|c|c|c|c|}
\hline
\begin{tabular}[c]{@{}c@{}}\textbf{Product} \end{tabular}                                 & \begin{tabular}[c]{@{}c@{}}\textbf{Information} \\ \textbf{source}\end{tabular}             & \begin{tabular}[c]{@{}c@{}}\textbf{When/Where} \\ \textbf{information} \\ \textbf{received}\end{tabular}                               & \begin{tabular}[c]{@{}c@{}}\textbf{Why there are} \\ \textbf{differences} \\ \textbf{ in source} \\  \textbf{precision}\end{tabular}                & \begin{tabular}[c]{@{}c@{}}\textbf{Why there are} \\ \textbf{differences in} \\ \textbf{consumers abilities'} \\ \textbf{to recognize} \\ \textbf{precision}\end{tabular} \\ \hline
Digital camera                                                                              & \begin{tabular}[c]{@{}c@{}}Online \\ spokesperson \end{tabular} & Instagram                                                                                             & \begin{tabular}[c]{@{}c@{}}Does the \\ spokesperson have \\ experience with   \\ photography?\end{tabular} & \begin{tabular}[c]{@{}c@{}}Does the consumer \\ have experience \\ with photography?\end{tabular}                                        \\ \hline
\begin{tabular}[c]{@{}c@{}}New \\ neighbourhood \\ restaurant\end{tabular}                  & \begin{tabular}[c]{@{}c@{}}Next door \\ neighbour\end{tabular}            & \begin{tabular}[c]{@{}c@{}}While cutting \\ the hedge \\ last weekend\end{tabular}                    & \begin{tabular}[c]{@{}c@{}}Is the neighbour \\ a foodie?\end{tabular}                                       & \begin{tabular}[c]{@{}c@{}}How long has \\ the consumer known \\ the neighbour?\end{tabular}                                           \\ \hline
\begin{tabular}[c]{@{}c@{}}Recently opened \\ hotel in Prince \\ Edward County\end{tabular} & \begin{tabular}[c]{@{}c@{}}Independent \\ expert\end{tabular}             & \begin{tabular}[c]{@{}c@{}}Prince Edward \\ County “What’s \\ happening \\ website”\end{tabular}      & \begin{tabular}[c]{@{}c@{}}Has the expert \\ rated new \\ hotels before?\end{tabular}                      & \begin{tabular}[c]{@{}c@{}}Has the consumer \\ looked at hotel \\ reviews before?\end{tabular}                                        \\ \hline
\begin{tabular}[c]{@{}c@{}}Car repair \\ garage \\ near work\end{tabular}                   & \begin{tabular}[c]{@{}c@{}}A friend \\ at work\end{tabular}              & \begin{tabular}[c]{@{}c@{}}At lunch, the\\ consumer told \\him that her \\ car was making \\ a funny noise\end{tabular} & \begin{tabular}[c]{@{}c@{}}Has the friend \\ had his car \\ fixed at different \\ car garages?\end{tabular} & \begin{tabular}[c]{@{}c@{}}Does the consumer \\ understand how a \\ mechanic works?\end{tabular}                                                      \\ \hline
\begin{tabular}[c]{@{}c@{}}Personal voice \\ assistant \end{tabular}                   & \begin{tabular}[c]{@{}c@{}}A friend's \\ friend \end{tabular}              & \begin{tabular}[c]{@{}c@{}}Facebook post\end{tabular} & \begin{tabular}[c]{@{}c@{}}Has the friend's \\ friend had \\ experience with \\ high-tech?\end{tabular} & \begin{tabular}[c]{@{}c@{}}Is the consumer \\ tech-savvy?\end{tabular}                                                      \\ \hline
\end{tabular}
\end{table}

Whether a consumer can assess the precision of information should also have a significant effect on how a market functions. Consumers who are familiar with the source of information are more likely to recognize its precision. Back to our earlier example, a consumer who reads the post on social media, might have followed the reviewer for a long time. As a result, the consumer knows whether or not the reviewer is a source that generates precise information in the digital camera category. Alternatively, a consumer who does not know the reviewer may coincidentally encounter this post on social media. This consumer might have significantly more difficulty in assessing its precision \citep{Foroughifar20}. Accordingly, because different consumers have different knowledge about the source of information, they might reach different conclusions upon receiving the same information. A recent survey by Podium \citep{Podium17} suggests that 66\% of consumers do not trust information from a source they are unfamiliar with.

Another factor that may help a consumer to recognize the precision of information is her own expertise in the category. Consider a consumer who obtains information through WOM about a new high-tech product (e.g. a personal voice assistant). Here, if the consumer is not familiar with the product category or does not understand the purpose of the new technology, she may not be able to evaluate the precision of the WOM. As a consequence, both high precision and low precision information may have the same effect on her behavior. However, if the consumer is tech-savvy, she may be able to distinguish between precise and imprecise WOM. Accordingly, she may take different actions upon receiving high versus low precision information. The point here is that consumers are often heterogeneous in terms of recognizing precise versus imprecise information (e.g. because they are unfamiliar with the source or with the product category); this difference can have a substantial impact on market behavior.

From a firm's perspective, \textit{information precision} and \textit{consumer ability to assess precision} are two important factors that affect profit and a firm may have an incentive to influence them. First, the firm may have strategies to improve the precision of information; sponsoring a third party to generate precise information or taking action to limit imprecise information. 
Second, a firm might be able to help individual consumers assess information precision. For instance, a firm might promote transparency in "the source" of information or publish guidelines to educate consumers how to interpret third-party information. A firm might even provide "expert" badges to highlight online spokespeople who have expertise in the product category; this would help consumers recognize sources that disseminate precise information. Such strategies might increase the consumer's ability to sort high precision information from information that is less precise.
\footnote{These strategies are not the only ways that firms can influence information in the market. Firms may also "directly" transmit/distort information to influence consumers’ perceptions and persuade them that their product is of high quality. These direct strategies are well-studied in the marketing and economics literature \citep{KamenicaGentzkow11,BergemannMorris19}. Our work approaches this problem from a different angle and abstracts away from the possibility that the firm can directly transmit/distort information. We study novel forces in isolation from direct strategies to clearly understand their market consequences. Moreover, for third-party information transmission, a firm has little control over the message; it may be so that it is more practical for a firm to affect the information through indirect actions.}

Given the strategies that a firm might have to influence the precision of third party information and/or consumers' ability to assess precision, our objective is to determine when a firm should exercise these strategies. The specific research questions we aim to answer are: (1) How do prices and profits change when the precision of third-party information changes? (2) How does the incentive of a firm to impact precision differ when it is high quality versus low quality? (3) Does a firm prefer consumers with high skill in assessing information precision or consumers who have difficulty assessing precision?

We consider a monopoly firm that sells a new product to a unit mass of consumers. The firm knows the quality of its product (high or low) and consumers are uninformed about product quality. Consumers costlessly receive a signal which is imperfectly informative about product quality. Consumers may or may not be able to evaluate the precision of this signal. If the consumer knows the precision of the signal, she distinguishes between signals of high versus low precision (a \textit{sophisticated} consumer). However, if she cannot asses the precision of a signal, she applies the same precision factor to any signal she receives (a \textit{naive} consumer).

Intuitively, one expects that more precise information should benefit a firm when it is high quality and hurts the firm when it is low quality. Our results show that is not always the case. When the firm is low quality, it always suffers from more precise information. However, when it is high quality, it may also suffer from more precise information. Specifically, as precision of information increases, a high quality firm's profit first decreases and then increases. 
The intuition for this is as follows. When the precision of information is low, there is low dispersion in consumer valuations (posteriors). Here, the firm sets a low price and sells to all consumers. As precision increases, there is more dispersion in valuations, meaning that consumers who obtain a bad signal, have a lower valuation, and consumers who obtain a good signal, have a higher valuation. When precision is sufficiently high, broad dispersion causes the firm to switch and only serve consumers with high valuations (by charging a high price). Therefore, as precision increases, the high quality firm's profit first decreases and then increases. 

An important insight from our first result is that the precision range in which a high quality firm gains or suffers from more precise information depends on how skilled consumers are at recognizing precision. The same is not true for a low quality firm. At some point, improvements in information precision and consumer ability to assess precision impair the ability of low quality firm to survive.

A second key finding of our model is that when the firm is high quality, it prefers a market of naive versus sophisticated consumers not only when precision is low, but also when precision is high. This is surprising because it seems that when a firm is high quality, it should prefer a market where consumers can distinguish between precise and imprecise signals. 
However, we find that when precision is in the highest range, the equilibrium price in the naive market is higher: this makes the market more profitable for the firm when it is high quality. Exactly the opposite is the case for when the firm is low quality. When the firm is low quality, it prefers a sophisticated market when precision is in the highest range; despite the lower price, it sells more often.

Our third finding relates to the firm's incentive to help consumers assess precision. We show that the firm's profit often decreases with the consumer ability to recognize precision. Specifically, when the fraction of sophisticated consumers is below a threshold, the firm suffers from improvements in the consumer ability to assess precision. However, when the fraction is above the threshold, the reverse is true. Counter-intuitively, even when information is highly precise, whether or not a firm has incentive to help consumers to assess precision depends on the fraction of market which is sophisticated. This illustrates the importance of consumer ability to recognize precision as a determinant of equilibrium firm behavior. We also provide several extensions which demonstrate the robustness of our model.


The remainder of the paper proceeds as follows. Section \ref{Litreview} reviews the related literature and discusses our contribution. Section \ref{base_model} presents the base model and the main results. In Section \ref{extensions}, we explore extensions to the base model. Finally, in Section \ref{conclusion}, we conclude and discuss the managerial implications of our work.

\section{Related Literature} 
\label{Litreview}

Our study is related to the literature on firm incentives to provide information to consumers. \cite{LewisSappington94} examine how a seller provides information to potential buyers about their taste for its product. They show that the seller provides either high precision information or no information. Similar results are presented by \cite{JohnsonMyatt06} using an aggregate demand function. 
These studies examine a situation in which the seller does not know its quality (there is no signalling through price) and the information precision is common knowledge. We study a different situation where the seller knows its quality and consumers have heterogeneous ability to assess precision. So, our first contribution is to extend this literature to an environment in which both third-party information and firm pricing strategies can signal product quality. This is important because we find that the U-shaped behavior of profit with respect to information precision breaks down when the firm's quality is low. 

Our second contribution is to show that the consumer ability to assess precision plays an important role in the firm's equilibrium strategies: it determines the precision range in which a high quality firm suffers from higher precision information. In addition, we show that it is not always preferable for a high quality firm to sell to consumers who are skilled at assessing precision. In fact, the firm may have incentive to sell in a market where consumers cannot evaluate precision (e.g. low transparency in information sources). Another distinction of our model is that the signals in our model come from third parties. As a consequence, the firm does not have direct control on information precision. This stands in contrast to \cite{LewisSappington94} and \cite{JohnsonMyatt06} who assume the information is provided by the seller and the seller controls the precision.\footnote{This also distinguishes our paper from the literature on information design \citep{BergemannMorris19}.} 

Several other studies examine conditions under which firms benefit from providing information \citep{KuksovLin10, GuXie13, Brancoetal16}. Here, firms choose strategies to make the information they transmit credible \citep{MilgromRoberts86, Shin05, Miklos13}. In addition, third-party information and social interactions have become an important source of information which affect consumer beliefs and decisions \citep{SI05}. Our study contributes to this literature by introducing a new form of consumer heterogeneity: the ability to evaluate the precision of information.\footnote{Other papers have considered marketing questions in the context of third-party reviews \citep{ChenXie05, Mayzlin06, ChenXie08, KuksovXie10, PeiMayzlin19}. This work does not examine the impact that the precision of information has on marketing.} A unique aspect of our study is that the information distribution follows a hierarchical structure. This allows us to model consumer heterogeneity in terms of the ability to distinguish "information precision".

To summarize, it is well established that the precision of information is important in consumer decision making.\footnote{The consumer behavior literature has also studied the importance of information credibility in determining the persuasiveness of communication; see for example \cite{Chiken80}.} Firms may even have an incentive to manipulate information precision. The standard assumption in the literature is that the precision of information is common knowledge and fixed. Yet, \cite{Zhaoetal13} provide empirical evidence that the credibility of third-party information can vary. They show that more credible information has a greater effect on the consumer's purchase decision. \cite{Mayzlin06} speculates that consumers cannot always distinguish between precise and imprecise information.  Building upon these studies, we further argue that there is heterogeneity in consumers' ability to assess information precision. In addition, both the precision and the ability of consumers to distinguish high precision from low precision information varies across markets. Accordingly, our contribution is to examine the impact of two factors on how markets unfold: a) the precision of information from third parties which is often driven by the age of the category and, b) the ability that consumers have to recognize precision. We also examine firm strategies when there is heterogeneity in consumers’ ability to recognize precision.  Today, this is important because evaluating the precision of information in the digital world is challenging. 

The context for our model is a recently introduced (experience) product by a monopolist. Consumers need information to learn about the product and third-party information is important because consumer knowledge is assumed to be low. When a new product is introduced, there is significant variability in the quality of information sources, and consumers may not be able to distinguish high precision sources from low precision sources. As people become more familiar with a category, a lexicon for the product category arises and consumers discuss the category more easily. Hence, it becomes easier to discriminate between high precision and low precision information. In the following section, we present the model that we use to analyse this topic.

\section{The Base Model}
\label{base_model}
The base model entails a seller that offers a new product to a unit mass of consumers. Nature randomly determines the product quality $Q \in \{G,B\}$, where $G$ represents good quality and $B$ represents bad quality. We assume that consumers' reservation price for the good and the bad quality product are exogenously given by $v_G=1$ and $v_B \in [0,1)$. The firm observes the quality but consumers do not. Consumers' prior beliefs are common knowledge, $Pr(G)=\frac{1}{2}$.\footnote{We assume priors are the same across consumers for simplicity. Homogeneous priors is a reasonable assumption for a product that is new; this is commonly assumed in the literature \citep{BergemannValimaki97, Boseetal08, Papanastasiou17}.} The firm sets price to maximize expected profit. Consumers costlessly observe the price and an independent signal about product quality, $\sigma$, update their beliefs about the quality, and then decide to buy either one unit of the product or nothing. The signal is obtained from a third-party that provides information about product quality. One can think of third-parties as friends, online spokespeople, and random posts on social media platforms.\footnote{More generally, the signal can be interpreted as any information not generated by the seller that helps the consumer update her belief regarding product quality. The assumption that the consumer obtains "one" signal has no qualitative bearing on our model insights because all the third-party information that a consumer obtains before making a purchase decision can presumably be summarized in one signal. In addition, we study products that are relatively inexpensive (relative to income). So, it is unlikely that a consumer performs costly search to obtain many signals from different sources.} 

The content of a signal is described by a pair $(q,w)$, where $q \in \{g,b\}$ is the quality that the signal indicates, and $w \in \{h,l\}$ is the precision of the signal. The precision of the signal is related to how informative the signal is and it determines the conditional probability of the signal valence supporting the true product's quality. Given the precision value, $w$, and the true product quality, $Q$, the conditional probabilities of observing signal $q$ are defined as follows:\footnote{This "precision criteria" that is defined based on the property that more informative signals lead to a more disperse distribution of conditional expectations is commonly used in the literature \citep{Blackwell51,LewisSappington94, ChenXie08, MayzlinShin11}. See \cite{GanuzaPenalva10} for a detailed discussion.}
$$Pr(q=g\text{ }|\text{ }Q=G;w)=Pr(q=b\text{ }|\text{ }Q=B;w)=w$$

We further assume that the signal precision $w$ is \textit{ex ante} equally likely to be high or low, i.e., $Pr(w=h)=\frac{1}{2}$, and the signal generating process is common knowledge. For signals to have value, the posterior probability of a state, following a signal which supports that state, needs to be higher than the prior probability.  Thus, the precision is bounded by $[\frac{1}{2},1]$.\footnote{In general, when there are $N$ states, the precision is bounded by $[\frac{1}{N},1]$} A signal of precision $\frac{1}{2}$ is uninformative and a signal of precision 1 is perfectly informative. For simplicity, we assume $0.5= l<h\leq1$ throughout.\footnote{The main findings of our model remain qualitatively similar if we allow both $h$ and $l$ to be flexible.}

Note that the hierarchical relationship between signal precision and the valence of the signal implies that there are two levels of information regarding its content. The first level of information is the precision of the signal which is independent of $Q$. The second level of information is the valence of the signal, which depends on both precision and the true quality of the product. For a given product quality, $Q$, these two levels are depicted below.\footnote{$F_{q|w,Q}(q|w,Q)$ is the conditional c.d.f which depends on both $w$ and $Q$.}
\\
\begin{center}
\begin{tikzpicture}
\matrix[matrix of math nodes, column sep=20pt, row sep=20pt] (mat)
{
    & w & \\ 
    & q &  \\
};

\foreach \column in {2}
{
    \draw[->,>=latex] (mat-1-2) -- (mat-2-\column);
}

\node[anchor=east] at ([xshift =-40pt]mat-1-2) 
{$w \sim \text{uniform
}\{h,l\}$};
\node[anchor=east] at ([xshift =-40pt]mat-2-2) 
{$q \sim \text{F}_{q|w,Q}$} ;

\end{tikzpicture}
\end{center}
A consumer in this model receives one of 4 types of signals:
$$\sigma \in \big\{\sigma_{q,w} \text{    } \big| \text{    } q \in \{g,b\}\text{  ,    }w\in\{h,l\}\big\} \equiv \big\{\sigma_{g,h},\sigma_{g,l},\sigma_{b,h},\sigma_{b,l}\big\}$$ 
In practice, a consumer knows whether a signal is positive or negative. However, people do not always know the precision. They may have to make inferences about it. The high cost of assessing precision, the anonymity of the signal's source, and a lack of consumer expertise in the product category are conditions which create difficulty in accessing signal precision. We assume all consumers observe whether a signal supports good quality or bad quality. But they may or may not be able to distinguish the precision of the signal. A consumer who distinguishes precision is called \textit{sophisticated} and a consumer who does not distinguish precision is called \textit{naive}. Throughout, we assume that consumer types are exogenous. 

\subsection{The Consumer Belief Updating Process}
As noted earlier, consumers are heterogeneous in their ability to distinguish signal precision. This implies that consumers are heterogeneous in terms of their posterior beliefs. Hence, we present consumers' posterior beliefs based on the signals they receive. 

\subsubsection{Sophisticated Consumers} \label{subsec_soph}
Suppose a signal, denoted by $\sigma_{q,w}$, is received by a sophisticated consumer. Since the consumer recognizes the precision, the conditional probabilities of each valence, $Pr(q|Q;w)$, depends on both the product quality, $Q \in \{G,B\}$, and the precision, $w \in \{h,l\}$, as shown in Table \ref{conditional_p_soph}:\\
\begin{table}[ht]
\centering
\caption{Conditional probabilities for sophisticated consumers}
\label{conditional_p_soph}
\begin{tabular}{@{} ccc @{}}
\hline
 & Pr($q|G;w$) & Pr($q|B;w$) \\
\midrule
$q=g$ & $w$ & $1-w$ \\
\midrule
$q=b$ & $1-w$ & $w$ \\
\bottomrule
\end{tabular}
\end{table}\\
A sophisticated consumer uses these conditional probabilities to derive signal-specific conditional probabilities, $Pr(\sigma_{q,w}|Q)=Pr(q|Q;w)Pr(w)$, and updates her belief accordingly. The consumer's posterior belief, upon receiving a signal, follows Bayes' rule:
 \begin{equation}
 Pr(G|\sigma_{q,w}) =\frac{Pr(\sigma_{q,w}|G)Pr(G)}{\sum_{Q}
 {Pr(\sigma_{q,w}|Q)Pr(Q)}}
 \end{equation}
 This leads to the posterior beliefs shown in Table \ref{Posterior_sophisticated}.\\
\begin{table}[ht]
\centering
\caption{Posterior beliefs of sophisticated consumers}
\label{Posterior_sophisticated}
\begin{tabular}{@{} ccc @{}}
\hline
 \ & Pr($G|.$) & Pr($B|.$) \\
\midrule
$\sigma_{g,h}$ & $h$ & $1-h$ \\
\midrule
$\sigma_{g,l}$ & $0.5$ & $0.5$ \\
\midrule
$\sigma_{b,l}$ & $0.5$ & $0.5$ \\
\midrule
$\sigma_{b,h}$ & $1-h$ & $h$ \\
\bottomrule
\end{tabular}
\end{table}
\subsubsection{Naive Consumers}\label{subsec_naive}
Suppose a signal $\sigma_{q,w}$, is received by a naive consumer. Because the consumer is naive, she does not recognize the precision of the signal. Here, we drop the second subscript for ease of exposition and show the signal by $\sigma_q$. The naive consumer forms an expectation about the precision, defined by $\bar{w}=\frac{l+h}{2}=\frac{1/2+h}{2}$, and applies this precision to any signal she obtains. Here, the conditional probability of each valence, $Pr(q|Q)$, is defined as in Table \ref{conditional_p_naive}:\footnote{Note that $Pr(q|Q)=\sum_{w} Pr(q|Q;w)Pr(w)=\sum w Pr(w)=\bar{w}$}
\\
\begin{table}[ht]
\centering
\caption{Conditional probabilities for naive consumers}
\label{conditional_p_naive}
\begin{tabular}{@{} ccc @{}}
\hline
 & Pr($q |G$) & Pr($q |B$) \\
\midrule
$q=g$& $\bar{w}$ & $1-\bar{w}$ \\
\midrule
$q=b$ & $1-\bar{w}$ & $\bar{w}$ \\
\bottomrule
\end{tabular}
\end{table}
\\
In this case, the signal-specific conditional probabilities coincide with the conditional probabilities of the valence, $Pr(\sigma_{q}|Q)=Pr(q|Q) \text{, }\text{ }Q \in \{G,B\}$. Using these probabilities and Bayes rule, the naive consumer's posterior beliefs are as follows:\\
\begin{table}[ht]
\centering
\caption{Posterior beliefs of naive consumer}
\label{Posterior_naive}
\begin{tabular}{@{} ccc @{}}
\hline
\ & Pr($G|.$) & Pr($B|.$) \\
\midrule
$\sigma_g$ & $\frac{1+2h}{4}$ & $\frac{3-2h}{4}$  \\
\midrule
$\sigma_b$ &$\frac{3-2h}{4}$ & $\frac{1+2h}{4}$ \\
\bottomrule
\end{tabular}
\end{table}
\\
As expected, the naive consumer's posterior belief does not vary across signals of different precision.

\subsection{The Game}

The monopolist sells the product at price $p$, to a unit mass of consumers. The firm knows product quality, $Q \in \{G,B\}$, but consumers do not. The marginal cost of production is constant and normalized to zero.\footnote{This implies that all products have positive value relative to cost.} We assume a fraction $\lambda$ of consumers are \textit{sophisticated} and the remainder $1-\lambda$ are \textit{naive}, where $\lambda \in [0,1]$ is common knowledge. Each consumer independently receives a signal about product quality and the firm does not observe individual-specific signals. However, the distribution of signals is common knowledge. This implies that the firm knows the value of $h$ and $l$, and infers the signal-specific conditional probabilities derived earlier. Given its quality, the firm sets price $p$ to solve the following maximization problem:
 \begin{equation}
 \label{p}
 \tilde{p}(Q;h,\lambda) \in \underset{p}{\text{argmax}}\text{  }p \times \Big[\lambda \sum_{\sigma \in A} Pr(\sigma|Q) \ \tilde{d}_S(p,\sigma)+(1-\lambda) \sum_{\sigma \in B} Pr(\sigma|Q) \ \tilde{d}_N(p,\sigma)\Big]
 \end{equation}
where $A = \{\sigma_{b,h},\sigma_{g,h},\sigma_{b,l},\sigma_{g,l}\}$ and $B =\{\sigma_b,\sigma_g\}$. In addition, $\tilde{d}_S(p,\sigma)$ and $\tilde{d}_N(p,\sigma)$ represent the demand of sophisticated and naive consumers, respectively. Note that the summation terms in the right hand side of equation \eqref{p}, correspond to the expected demand, which depends on the distribution of signals (with average precision $\bar{w}=\frac{1+2h}{4}$). An exogenous change in $h$ leads to a change in the average precision, and consequently affects the distribution of signals and equilibrium strategies (e.g. the firm's price and the consumers' purchase decisions).

Because the firm knows the quality of its product, price can also signal product quality. Therefore, both the signal and the price contribute to form the consumer's posterior belief about product quality. Without loss of generality, we assume the consumer first observes a private signal and updates her belief to $Pr_k(G|\sigma)$ according to the analysis provided earlier.\footnote{Here, to avoid confusion and distinguish between the posterior belief of naive and sophisticated consumers, we add a subscript $k \in \{N,S\}$.} She then observes the price, $p$, forms a new belief, $\mu_k(\sigma,p)=Pr_k(G|\sigma,p)$, $k \in \{N,S\}$, and decides whether to buy or not. The consumer is rational in that she maximizes expected utility given her private signal, $\sigma$, and the price:  
  \begin{equation}
   \label{d}
  \tilde{d}_k(p,\sigma) \in \underset{d\in\{0,1\}}{\text{argmax}}\text{ }d\times[\mathbb{E}_k(v|\sigma,p)-p] \ , \ k \in \{N,S\}
  \end{equation}
   This implies that the consumer buys when the expected value, $\mathbb{E}_k(v|\sigma,p)=\mu_k(\sigma,p) v_G+(1-\mu_k(\sigma,p))v_B$, is greater than the price.

\subsection{Equilibrium}
The timing of the game is as follows. Nature determines the quality of the product. The firm observes the quality and sets price. Consumers observe private signals and the price. They then update their beliefs about product quality and decide whether or not to purchase. The solution to this game is based on the Perfect Bayesian Equilibrium (PBE). We examine equilibria in pure strategies. 

\begin{definition}
A (weak) PBE is a triple $(\tilde{p},\tilde{d},\mu)$ such that: 

\textit{1. The firm's strategy is sequentially rational: for $Q \in \{G,B\}$, $\tilde{p}$ solves equation \eqref{p}}

\textit{2. Consumers strategies are sequentially rational: for all $\sigma$ and $p$, $\tilde{d}$ solves equation \eqref{d}}

\textit{3. Consumers beliefs are consistent.}
\end{definition}
In a signalling game, it is necessary to specify the on and off equilibrium beliefs. We elaborate in the next subsection.

\subsubsection{Equilibrium Beliefs and Off-Equilibrium Beliefs}
 
 We first examine the nature of equilibrium beliefs in the model. In signalling games, consistency is a required condition for equilibrium. In a separating equilibrium, the price perfectly reveals the quality of the product. As a result, the consumer's posterior will be independent of her private signal. However, in a pooling equilibrium, the firm sets the same price independent of its quality. In this case, the consumer's posterior is formed as a function of her private signal, $Pr_k(G|\sigma)$, which is shown in Tables \ref{Posterior_sophisticated} and \ref{Posterior_naive}. Therefore, conditional on the signal's precision, a signal contains information about product quality independent of the observed price. 
 
 In order to restrict the off-equilibrium beliefs, we use the Intuitive Criterion \citep{ChoKreps87}. This implies that beliefs cannot ascribe positive probability to a player choosing a strategy that is equilibrium dominated. Thus, whenever a consumer observes an "off equilibrium path" strategy which is beneficial for only one type of firm, she assigns probability 1 to that type. Otherwise, her belief is updated according to her private signal, $\mu_k(\sigma,p)=Pr_k(G|\sigma)$. 
 
\subsection{Results}
In this section, we present the equilibrium outcome of the game.\footnote{As noted earlier, price is a function of $Q$, $h$, and $\lambda$. But we suppress $h$ and $\lambda$ for ease of exposition throughout.} We first identify a lower bound for the equilibrium price in this game.

\begin{lemma}
\label{lem_price_interval}
In all equilibria of this game, prices are in the interval of $[v_B,1]$.
\end{lemma}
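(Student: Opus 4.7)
The plan is to bound equilibrium prices from above by $1$ and from below by $v_B$ separately, in each case by exhibiting a profitable deviation to a price in $[v_B,1]$ whose payoff does not depend on off-equilibrium beliefs. This robustness is precisely what will allow the argument to bypass the Intuitive Criterion, which is normally the subtle step in signalling games of this sort.

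For the upper bound $p \le 1$, the key observation is that every consumer's posterior expected value satisfies $\mathbb{E}_k[v|\sigma,p] = \mu v_G + (1-\mu) v_B \le v_G = 1$, since $v_B < v_G = 1$. Hence any price $p>1$ yields zero demand and zero profit for the firm of either type. I would compare this to the deviation $p' = v_B$, which guarantees profit at least $v_B$: at this price, $\mathbb{E}_k[v|\sigma,p'] \ge v_B = p'$ under \emph{any} posterior belief (including the worst case $\mu=0$), so every consumer weakly prefers to buy and under the standard tie-breaking convention all do. When $v_B>0$ this is a strict improvement over zero, ruling out $p>1$ in the firm's argmax of \eqref{p}.

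For the lower bound $p \ge v_B$, suppose toward a contradiction that an equilibrium price $p^* < v_B$ exists. At $p^*$, every consumer has $\mathbb{E}_k[v|\sigma,p^*] \ge v_B > p^*$, regardless of beliefs, so all buy and profit equals $p^*$. Deviating to $p' = v_B$, the same reasoning implies all consumers still weakly prefer to buy, yielding profit $v_B > p^*$, a strict improvement that contradicts equilibrium optimality of $p^*$.

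The main potential obstacle in such an argument is pinning down off-equilibrium beliefs via the Intuitive Criterion, which can in principle block deviations by assigning pessimistic beliefs. The reason this obstacle is neutralized here is that both candidate deviations land at the price $p' = v_B$, which lies in the range where consumer purchase is weakly optimal at \emph{every} posterior belief: even if the consumer is driven to the most unfavorable posterior $\mu=0$, her expected value is exactly $v_B = p'$ and she still (weakly) buys. Consequently the deviation payoffs are determined without needing to compute which off-equilibrium beliefs the Intuitive Criterion admits, and the two bounds follow.
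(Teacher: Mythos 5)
Your argument is correct and matches the paper's own reasoning for Lemma \ref{lem_price_interval}: prices above $1$ yield zero demand and prices below $v_B$ are dominated because a deviation to $p'=v_B$ sells to every consumer at \emph{any} posterior (since $\mathbb{E}_k[v|\sigma,p']\geq v_B$ always), so the bound is belief-robust and the Intuitive Criterion never needs to be invoked. The only hair to split is the degenerate case $v_B=0$, where the deviation to $v_B$ is not a \emph{strict} improvement over a price above $1$; there one must instead deviate to an interior price and use the paper's off-equilibrium belief specification, but this is an edge case the paper itself glosses over.
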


Consumers are willing to pay a maximum of $v_B$ when the product quality is low and up to 1 when quality is high. Here, even when quality is low, the firm will charge a price equal to or greater than $v_B$. This implies that the equilibrium price lies in the interval $[v_B,1]$ independent of the signals received by consumers. As a result, all prices less than $v_B$ and greater than 1 are dominated. Next, we outline the parameters of a putative separating equilibrium for this game.

\begin{lemma}
\label{lem_pb}
In any separating equilibrium of this game, it must be the case that $\tilde{p}(B)=v_B$.
\end{lemma}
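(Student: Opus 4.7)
The plan is to combine Lemma \ref{lem_price_interval} with a straightforward deviation argument. By definition of a separating equilibrium, the prices $\tilde{p}(G)$ and $\tilde{p}(B)$ are distinct, and consumers observing $\tilde{p}(B)$ on the equilibrium path attach probability one to $Q=B$ regardless of the private signal they have received. Hence their expected valuation at price $\tilde{p}(B)$ is exactly $v_B$. Lemma \ref{lem_price_interval} already delivers the lower bound $\tilde{p}(B)\geq v_B$, so the entire task reduces to ruling out the strict inequality $\tilde{p}(B)>v_B$.

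I would argue this by contradiction. Assume $\tilde{p}(B)>v_B$. Then the expected willingness to pay ($v_B$) lies strictly below the quoted price, equation \eqref{d} forces $\tilde{d}(p,\sigma)=0$ for every signal, and the low-quality firm's equilibrium payoff is $0$. Now consider the deviation in which the low-quality firm charges $\tilde{p}(G)$. Because $\tilde{p}(G)$ is on the equilibrium path of the high-quality type, consistency pins the consumer belief at that price to $Q=G$ for both naive and sophisticated consumers; expected valuation is therefore $v_G=1$. Lemma \ref{lem_price_interval} gives $\tilde{p}(G)\leq 1$, so every consumer buys, and the deviating low-quality firm earns $\tilde{p}(G)$. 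Provided $\tilde{p}(G)>0$, this strictly exceeds the equilibrium payoff of $0$, contradicting sequential rationality for the low-quality type, and forcing $\tilde{p}(B)\leq v_B$, which combined with Lemma \ref{lem_price_interval} yields $\tilde{p}(B)=v_B$.

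The one subtlety—and the main obstacle—is verifying $\tilde{p}(G)>0$ so that the deviation is strictly profitable. In the generic case $v_B>0$ this is immediate from $\tilde{p}(G)\geq v_B>0$. In the boundary case $v_B=0$, I would note that $\tilde{p}(G)=0$ together with the hypothesized $\tilde{p}(B)>v_B=0$ is already inconsistent: the high-quality firm earns $0$ at $\tilde{p}(G)=0$, yet it can deviate to some small positive price $p'$, whose off-equilibrium belief is pinned by the Intuitive Criterion to $Q=G$ (since, with $v_B=0$, the low-quality firm cannot benefit from $p'$ under any belief, as consumers with belief $Q=B$ will not pay a positive price), generating strictly positive profit. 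So $\tilde{p}(G)>0$ in every separating equilibrium, the deviation argument goes through, and $\tilde{p}(B)=v_B$ as claimed.
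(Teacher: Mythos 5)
Your main argument is correct and reaches the paper's conclusion, but by a different deviation than the paper uses. The paper's route is purely about the low type's own price: once separation reveals $Q=B$, every consumer's willingness to pay at $\tilde{p}(B)$ is exactly $v_B$, so any price above $v_B$ sells nothing, whereas charging $v_B$ sells to everyone under \emph{any} belief; hence $\tilde{p}(B)\le v_B$, and Lemma \ref{lem_price_interval} closes the argument. You instead have the low type mimic the high type by deviating to $\tilde{p}(G)$ — which is really the engine of Proposition \ref{no_separating}, front-loaded into the lemma. Both deviations work, but the paper's buys something concrete: the deviation to $v_B$ is profitable regardless of off-path beliefs and regardless of $\tilde{p}(G)$, so no auxiliary claim that $\tilde{p}(G)>0$ is needed; your route forces you to establish that claim separately.

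That auxiliary step is where your argument has a genuine flaw. In the boundary case $v_B=0$, $\tilde{p}(G)=0$, $\tilde{p}(B)>0$, you assert the Intuitive Criterion pins the belief at a small off-path price $p'>0$ to $Q=G$ because "the low quality firm cannot benefit from $p'$ under any belief." That is false: in the hypothesized configuration the low type's equilibrium payoff is $0$ (it sells nothing at $\tilde{p}(B)>v_B=0$), and under the most favorable belief $Q=G$ it would earn $p'>0$ at the deviation, so $p'$ is \emph{not} equilibrium-dominated for the low type and the Intuitive Criterion places no restriction on beliefs there. Your edge case can still be closed, but only by invoking the paper's explicit off-path belief rule — when the Intuitive Criterion does not bite, beliefs revert to the signal-based posteriors $Pr_k(G|\sigma)$, under which a small $p'>0$ attracts all consumers and gives the high type positive profit — not by the refinement itself. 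For $v_B>0$ your proof is complete as written; only the mechanism you offer for the $v_B=0$ corner is wrong.
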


If there is a separating equilibrium, the following inequalities must hold: $v_B \leq \tilde{p}(B)< \tilde{p}(G) \leq 1$. Further, the maximum possible profit when quality is low would be $v_B$. This obtains because in a separating equilibrium, consumers know the product quality from the observed price: the maximum WTP when quality is low would be $v_B$. This implies that in equilibrium the price must satisfy $\tilde{p}(B)\leq v_B$. Together with Lemma \ref{lem_price_interval}, this implies that $\tilde{p}(B)= v_B$. 

Accordingly, a putative separating equilibrium satisfies $v_B= \tilde{p}(B)< \tilde{p}(G) \leq 1$. This means that all consumers purchase the product in equilibrium. But in this situation, the firm has an incentive to deviate if it is low quality.\footnote{Note that we focus our attention on pure strategy equilibria. When $v_B$ is sufficiently high, there might be a mixed pricing strategy equilibrium with partial separation. We discuss this later.} This reasoning leads to Proposition \ref{no_separating}. The derivation of all results are provided in Appendix A.
\begin{prop}
\label{no_separating}
There is no separating equilibrium in this game, i.e., in any (pure-strategy) PBE under the intuitive criterion, the firm sets the same price independent of its quality.
\end{prop}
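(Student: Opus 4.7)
The plan is to derive a contradiction from the assumption that a pure-strategy separating PBE exists, by exhibiting a strictly profitable deviation for the low-quality firm that is consistent with on-equilibrium-path beliefs. Because Lemmas \ref{lem_price_interval} and \ref{lem_pb} already pin down $\tilde{p}(B) = v_B$, and separation requires $\tilde{p}(B) < \tilde{p}(G)$, it suffices to show the $B$-type can always mimic the $G$-type price profitably. The intuitive criterion will not be needed to rule separation out: if no separating PBE exists, none survives refinement.

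First I would fix the candidate separating profile $\tilde{p}(B) = v_B$ and $\tilde{p}(G) = p^\ast$ with $p^\ast \in (v_B, 1]$, the range guaranteed by the two lemmas. Next I would compute the equilibrium profits. On the equilibrium path, Bayesian consistency forces $\mu_k(\sigma, v_B) = 0$ and $\mu_k(\sigma, p^\ast) = 1$ for every signal $\sigma$ and every consumer type $k \in \{N, S\}$, because each price is issued by exactly one firm type. Consequently the expected valuation at $p^\ast$ is $\mathbb{E}_k(v \mid \sigma, p^\ast) = v_G = 1 \geq p^\ast$ for all consumers, so all consumers purchase at $p^\ast$ regardless of the private signal; the $G$-type earns $p^\ast$, while by the same argument all consumers purchase at $v_B$ when the firm is $B$, yielding profit $v_B$.

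Now I would examine the deviation in which the $B$-type charges $p^\ast$. Since $p^\ast$ lies on the equilibrium path, the posterior is determined by Bayes' rule rather than by any off-path specification, so consumers still believe $Pr(G \mid \sigma, p^\ast) = 1$ and all purchase. The deviating $B$-type therefore secures $p^\ast > v_B$, strictly exceeding its candidate equilibrium payoff. This violates sequential rationality of $\tilde{p}(B)$ in equation \eqref{p}, contradicting the PBE hypothesis; hence no pure-strategy separating PBE exists, and a fortiori none exists under the intuitive criterion.

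The only delicate step is the belief computation in the second paragraph: one has to be explicit that the price signal, being perfectly revealing under separation, dominates the private signal in the posterior, so the heterogeneity across sophisticated and naive consumers and across signal realizations plays no role at the deviation price. Once that is articulated cleanly, the contradiction is immediate; the intuitive criterion enters only as a remark, since it restricts off-path beliefs but here the decisive deviation price is on-path.
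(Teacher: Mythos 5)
Your proposal is correct and follows essentially the same route as the paper's proof: invoke Lemmas \ref{lem_price_interval} and \ref{lem_pb} to pin down $v_B=\tilde{p}(B)<\tilde{p}(G)\leq 1$, observe that on-path beliefs at $\tilde{p}(G)$ are degenerate at $G$ so all consumers buy, and conclude that the low-quality firm profitably mimics $\tilde{p}(G)$ --- a contradiction. Your added remarks (that the deviation price is on-path so the intuitive criterion is not actually needed here, and that the revealing price makes the private signal and the sophisticated/naive distinction irrelevant) are accurate elaborations of what the paper leaves implicit.
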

The consistency of beliefs ensures that consumer posteriors in a separating equilibrium are $\mu_k(\tilde{p}(G),\sigma) = 1$ and $\mu_k(\tilde{p}(B),\sigma) = 0$, independent of $\sigma$ and $k$. Here, signals do not affect consumers' decisions and prices are perfectly informative. We know from Lemmas \ref{lem_price_interval} and \ref{lem_pb} that a separating equilibrium must satisfy $v_B=\tilde{p}(B) \leq \tilde{p}(G) \leq 1$. However, this cannot be an equilibrium because the firm benefits from deviating to $\tilde{p}(G)$ when it is low quality.

In what follows, we first examine the effect of changes in information precision. This can be modelled as an exogenous change in the average precision of signals across consumers, $\bar{w}=\frac{0.5+h}{2}$. Accordingly, a focus of our analysis is comparative statics with respect to the high precision level, $h$.\footnote{Later, we examine an alternative representation of the improvements in information precision. It is based on an exogenous change in the unconditional probability of a high precision signal, $Pr(w=h)$. There, the insights are qualitatively similar to the base model.} We then examine the impact of consumer ability to assess precision by exogenously changing $\lambda$, the fraction of sophisticated consumers.

\subsubsection{A Market with Mixture of Consumers}

To start, we fix $\lambda$ and examine how the equilibrium profit and price are affected by changes in the precision of the signals. Later, we discuss how the equilibrium changes with $\lambda$. 

\begin{prop}
\label{mixed}
$\exists \text{ } \bar{v} \in [0,1]$ such that for $v_B \leq \bar{v}$, the following holds: there is a threshold $h^*(\lambda)$ such that in equilibrium, \\ $(a)$ If the quality of the firm is high, its profit decreases in precision when $h$ is below the threshold, and increases in precision when $h$ is above the threshold\\
$(b)$ If the quality of the firm is low, its profit decreases in precision for all $h \in [\frac{1}{2},1]$.
\end{prop}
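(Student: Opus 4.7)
The plan is to leverage Proposition \ref{no_separating}: since only pooling equilibria survive, on-path consumer beliefs depend only on the signal, and the candidate equilibrium prices reduce to the five expected valuations $v_1>v_2>v_3>v_4>v_5$ induced by the posteriors in Tables \ref{Posterior_sophisticated} and \ref{Posterior_naive}. Adjacent valuations are separated by the common gap $(2h-1)(1-v_B)/4$, so they collapse to $(1+v_B)/2$ at $h=1/2$ and spread linearly in $h$. The equilibrium profit is then the envelope $\Pi(Q,h)=\max_j \pi_j(Q,h)$, where $\pi_j(Q,h):=v_j\cdot D_j(Q,h)$ is computed by summing the signal probabilities in Tables \ref{conditional_p_soph} and \ref{conditional_p_naive} over the consumer types whose valuation meets $v_j$.

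The first step is to classify the five branches by monotonicity in $h$. For $Q=G$, raising $h$ shifts signal mass toward $\sigma_{g,h}$ and $\sigma_g$: the demands $D_1,D_2,D_3$ are strictly increasing and, combined with $v_1,v_2$ strictly increasing and $v_3$ constant, the three high-price branches $\pi_1,\pi_2,\pi_3$ are all strictly increasing in $h$. The sell-to-all branch $\pi_5=1-h(1-v_B)$ is strictly decreasing. The intermediate branch $\pi_4$ is a concave quadratic in $h$ (its derivative is linear with non-positive slope $-\lambda(1-v_B)/2$), so on $[1/2,1]$ it is either monotone or first increasing and then decreasing. For $Q=B$, the signal shifts reverse and a branch-by-branch check shows each $\pi_j(B,\cdot)$ is non-increasing in $h$, since the price premium of a higher $v_j$ never compensates for the demand loss. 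Part $(b)$ follows immediately because the envelope of non-increasing functions is non-increasing.

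For part $(a)$, decompose $\Pi(G,h)=\max(\pi_5,M_\uparrow,\pi_4)(G,h)$ with $M_\uparrow:=\max(\pi_1,\pi_2,\pi_3)$ strictly increasing. At $h=1/2^+$, $\pi_5$ strictly dominates every other branch because $D_5=1$ while $D_j<1$ for $j<5$, so the envelope coincides with $\pi_5$ locally and is strictly decreasing. At $h=1$, $\pi_5=v_B$ whereas $\pi_3(G,1)=(1+v_B)(3+\lambda)/8>v_B$ whenever $v_B<(3+\lambda)/(5-\lambda)$; setting $\bar v:=3/5$, the infimum of this bound over $\lambda\in[0,1]$, guarantees $M_\uparrow(G,1)>\pi_5(G,1)$ uniformly in $\lambda$ whenever $v_B\le\bar v$. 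Continuity and the opposite monotonicity of $\pi_5$ and $M_\uparrow$ force a unique crossover $h^*(\lambda)\in(1/2,1)$: on $[1/2,h^*]$ the envelope is supported by decreasing branches ($\pi_5$ together with any decreasing portion of $\pi_4$) and is strictly decreasing, and on $[h^*,1]$ it is supported by increasing branches ($M_\uparrow$ together with any increasing portion of $\pi_4$) and is strictly increasing. This is the U-shape claimed in $(a)$.

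The main obstacle is showing that the concave branch $\pi_4$ does not disrupt the two-phase structure. The delicate case is when $\pi_4$ has an interior maximum in $(1/2,1)$; here I would argue, using the explicit concavity of $\pi_4$ and the constraint $v_B\le\bar v$, that by the time $\pi_4$ would overtake $\pi_5$ it is either already in its decreasing phase (so it joins the decreasing branches without creating a bump) or the strictly-increasing $\pi_3$ has already overtaken both $\pi_5$ and $\pi_4$ (so the envelope transitions directly into the increasing branches). A secondary technicality is supporting the selected pooling equilibrium under the Intuitive Criterion, which I would handle exactly as in Proposition \ref{no_separating} by choosing off-equilibrium beliefs that make any upward deviation equilibrium-dominated for the low type and therefore unprofitable for either type.
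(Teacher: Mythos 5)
There are two genuine gaps, both traceable to the same oversight: in a pooling equilibrium the price is not chosen separately by each type. Your part $(b)$ asserts that the low-quality profit is the upper envelope $\max_j \pi_j(B,h)$ and concludes it is non-increasing because each branch is. But the pooling price is pinned down by the \emph{high}-quality type's problem (this is exactly how the paper constructs $\hat{h}_1,\hat{h}_2,\hat{h}_3$), and the low type's equilibrium profit is the low-quality branch evaluated at that price --- a concatenation of branch segments with discontinuities at the high type's switch points, lying weakly below the envelope. Monotonicity of each branch is not enough; you must also show the jump at each switch is downward for the low type. This does hold --- at a switch $p_i\to p_j$ the high type satisfies $p_jD_j(G)=p_iD_i(G)$, and since bad signals are relatively more likely under $B$, $D_j(B)/D_i(B)<D_j(G)/D_i(G)$, so $p_jD_j(B)<p_iD_i(B)$ --- but the argument is absent from your proof, and "envelope of non-increasing functions" proves a statement about the wrong object.

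The second gap concerns $\bar v$. You derive $\bar v=3/5$ from requiring the high type's increasing branches to dominate the sell-to-all branch at $h=1$. That condition guarantees $h^*<1$, but it is not the binding constraint. The constraint that actually defines $\bar v$ in the paper is the low-quality type's willingness to pool: by deviating to $p=v_B$ the low type sells to everyone regardless of beliefs and secures $v_B$, so the pooling equilibrium exists only if $\tilde\Pi_L(h)\ge v_B$ for all $h$, which fails at high $h$ unless $v_B$ is small (the paper's threshold is roughly $0.2$ at $\lambda=0$, and Section 4.1 shows that for intermediate $v_B$ the pure-strategy equilibrium unravels into mixed strategies). Your Intuitive-Criterion remark only addresses upward deviations, so with $\bar v=3/5$ the proposition as you state it is false on a nontrivial range of $v_B$. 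Finally, the treatment of the concave branch $\pi_4$ --- the one place where a genuinely non-monotone branch could break the U-shape --- is deferred with "I would argue"; the paper resolves this by an explicit four-case partition of $\lambda$ via $\hat\lambda_1,\hat\lambda_2,\hat\lambda_3$ and you would need to carry out the analogous computation (for $v_B\le\bar v$ one can check $\pi_4(G,\cdot)$ is in fact decreasing on all of $[\tfrac12,1]$, which closes the case, but this needs to be shown).
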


As noted earlier, in a pooling equilibrium, consumers' willingness to pay (WTP) depends only on the signal they receive. Hence, there are 5 groups of consumers in this market.\footnote{In our model, sophisticated consumers who receive a low precision ($g$ or $b$) signal are in the same group because a low precision signal is uninformative. In general, there are six groups of consumers.} Table \ref{table:WTP} shows the five possible WTP levels observed in this market. The WTP depends on the valence of the signal ($g$ or $b$), the precision of the signal ($h$ or $l$), and the receiver of the signal (sophisticated or naive). Sophisticated consumers who obtain the best signal, $\sigma_{g,h}$, have the highest WTP (level 5), and sophisticated consumers who receive the worst signal, $\sigma_{b,h}$, have the lowest WTP (level 1). The WTP of sophisticated consumers who receive a low precision signal is in between (level 3). Intuitively, the WTP of naive consumers who obtain good signals (level 4) is in between the sophisticated consumers who receive high precision good signals and low precision signals. Similarly, the WTP of naive consumers who obtain bad signals (level 2) is in between the sophisticated consumers who receive high precision bad signals and low precision signals. In the table, we also show the relative proportion of consumers in the market that fall into each of the 5 groups. Of course these proportions depend on whether the product is good or bad quality. Hence, the table shows the proportions for the cases when quality is $G$ and when quality is $B$. 
\\
\begin{table}[ht]
\centering
\caption{Willingness-to-pay based on the signal obtained}
\label{table:WTP}
\begin{tabular}{|c|c|c|c|c|c|}
\hline
 \begin{tabular}[c]{@{}c@{}} Consumer \\ Type\end{tabular}      & Sophisticated & Naive   & Sophisticated & Naive   & Sophisticated  \\ \hline 
Signal & $\sigma_{b,h}$ & $\sigma_b$   & $\sigma_{b,l}/\sigma_{g,l}$ & $\sigma_g$               &   $\sigma_{g,h}$      \\ \hline
WTP Value    & $(1-h)+h v_B$ & $(1-\bar{w})+\bar{w} \ v_B$ &  $0.5(1+v_B)$ & $\bar{w}+(1-\bar{w})v_B$               &  $h+(1-h)v_B$  \\ \hline
\begin{tabular}[c]{@{}c@{}}WTP Level \\ (Relative)\end{tabular}  & 1 & 2 & 3 & 4 & 5 \\ \hline
\begin{tabular}[c]{@{}c@{}}Population \\ Mass \\ ($Q=G$)\end{tabular} & $\frac{\lambda}{2}(1-h)$ & $(1-\lambda)\frac{0.5-h}{2}$ & $\frac{\lambda}{2}$ & $(1-\lambda)\frac{0.5+h}{2}$ & $\frac{\lambda }{2}h$ \\ \hline 
\begin{tabular}[c]{@{}c@{}}Population \\ Mass \\ ($Q=B$)\end{tabular} & $\frac{\lambda}{2}h$ & $(1-\lambda)\frac{0.5+h}{2}$ & $\frac{\lambda}{2}$ & $(1-\lambda)\frac{0.5-h}{2}$ & $\frac{\lambda }{2}(1-h)$ \\ \hline
\end{tabular}
\end{table}
\\
The firm sets price to maximize profit. There are five options for the optimal price in this situation, corresponding to each level of WTP in the market. Any price other than these is dominated. In a context where the signals are evenly split between high and low precision, WTP level 5 is not an optimal price because it results in too few sales. Thus, the firm chooses among the remaining four WTPs. Given $\lambda$, the optimal strategy depends on the precision level, $h$. 
\begin{figure}[ht]
\centering
 \caption{The firm's equilibrium strategy ($v_B=0.1$)}
  \label{mix_figure_R4}
    \includegraphics[scale=0.6]{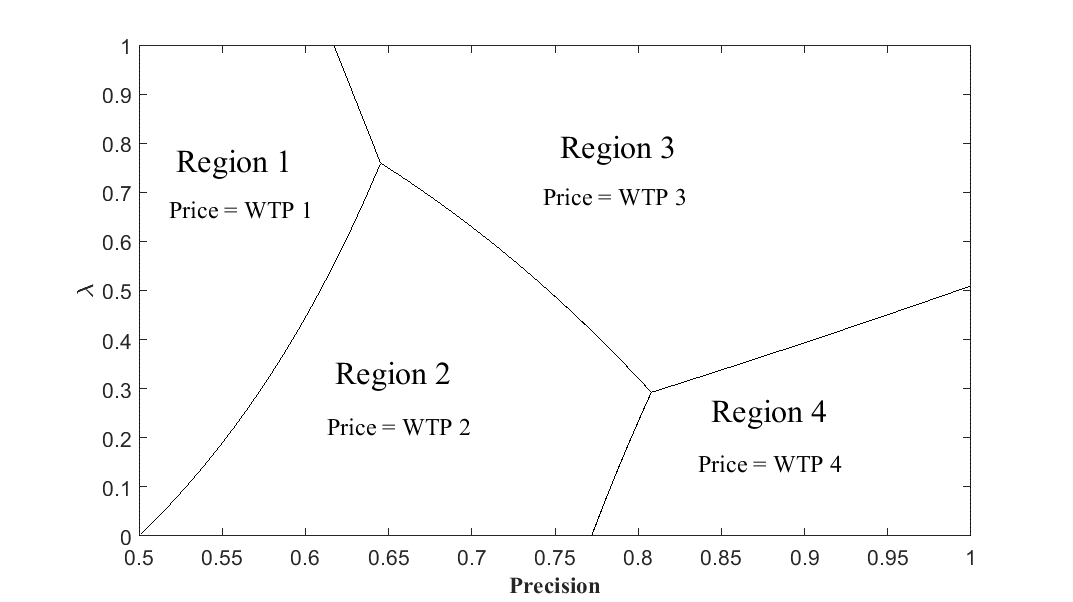}
\end{figure}

First, we assume precision is at the lowest level, $h=\frac{1}{2}$. In this case, the WTP of all consumers coincide at level 3, $\frac{1+v_B}{2}$, and this is the optimal price. Here, all consumers buy independent of their signal.  However, when precision is greater than $0.5$, WTPs spread across five values shown in the second row of Table \ref{table:WTP}. An increase in precision has two effects: 1) relative to WTP level 3, WTP levels 1 and 2 decrease, while WTP levels 4 and 5 increase (the dispersion in consumer posteriors increases).\footnote{We call consumers with WTP levels 1 and 2 as the "low end" of the market and consumers with WTP levels 4 and 5 as the "high end" of the market throughout.} 2) The relative mass of each WTP level changes. This latter effect is critically dependent on the quality level. If the product is high quality, the population mass goes from levels 1 and 2 (low end of the market) toward levels 4 and 5 (high end of the market). However, if the product is low quality, the reverse takes place.

As $h$ increases from $0.5$, the firm can either keep price at $\frac{1+v_B}{2}$ or switch to any of the alternative WTPs. The optimal price in a pooling equilibrium maximizes the firm's profit when it is high quality. So, we first explain the intuition from the firm's perspective when it is high quality. If the firm reduces the price to the lowest WTP, it sells to all consumers. But if it sets a higher price, it loses consumers with lower WTPs. When $h$ is close to $0.5$, it is optimal to keep price as low as possible to not lose consumers. This corresponds to Region 1 in Figure \ref{mix_figure_R4}. Here, there is a full market coverage, but the price decreases with $h$. Hence, the profit is decreasing in $h$.

As $h$ increases further, if the quality is high, the population mass goes from the low end of the market (WTP levels 1,2) to the high end of the market (WTP levels 4,5). This means that losing consumers with WTP levels 1 and 2 becomes less costly. At some point, the (high quality) firm finds it worthwhile to implement a discrete jump in price at the cost of losing the low end of the market. The discrete jump in price depends on $\lambda$. If $\lambda$ is close to 1, naive consumers and the corresponding mass of consumers at WTP level 2 is small, so the firm switches from WTP level 1 directly to WTP level 3. This happens in the top part of Figure \ref{mix_figure_R4} where there is a border between Regions 1 and 3. In Region 3, the low end of the market is not served. So, as $h$ increases, the price remains at WTP level 3 but, given the quality is high, sales increase due to the shift in the population mass from the low end to high end. As a result, the firm's profit increases in precision if it is high quality.\footnote{The constant price is an artifact of the assumption that $l=\frac{1}{2}$. In a model where $l>\frac{1}{2}$, and both $h$ and $l$ increase exogenously over time, two prices are possible (related to the WTP of consumers that receive a low precision signal): one which decreases in $l$ and the other which increases.} If $\lambda$ is smaller, then the price jump is from WTP level 1 to level 2 (Region 1 $\rightarrow$ Region 2). Here, the low end of the market is partially covered, so the price decreases with $h$. In Region 2, an increase in $h$ results in a negative effect (price $\downarrow$) and a positive effect (sales $\uparrow$ due to mass shift) on the high quality firm's profit. But the former effect dominates and the profit continues to decrease with $h$ in Region 2. 

When $h$ is sufficiently large, if the quality is high, eventually the firm switches to only serve the high end of the market, for all $\lambda$ (Regions 3 and 4). As noted earlier, when $\lambda$ is close to 1, the firm switches directly from WTP level 1 to level 3. If $\lambda$ is in a middle range, the firm first switches from WTP level 1 to level 2, then to level 3, and finally to level 4 (Region 1 $\rightarrow$ Region 2 $\rightarrow$ Region 3 $\rightarrow$ Region 4). However, if $\lambda$ is close to zero, the mass of sophisticated consumers (the fraction of the market at WTP level 3) is small, so the firm switches from WTP level 2 directly to level 4 (Region 1 $\rightarrow$ Region 2 $\rightarrow$ Region 4). In Region 4, because the mass of naive consumers is high, the high quality firm finds it beneficial to set the price at WTP level 4 so that it sells at a higher price but at the cost of losing a small fraction of sophisticated consumers who receive low precision signals (WTP level 3). 

Table \ref{table:mixed_eqn} summarizes the analysis of this section and extends it to the case where the monopolist has a low quality product. For the low quality case, the logic for pricing is unchanged because we showed earlier that the equilibrium price is the same independent of the product quality. However, the behavior of sales with respect to precision is reversed because as $h$ increases, the mass goes from the high end of the market to the low end. As a result, in general, sales decrease with $h$. Consequently, when the quality is low, the profit of the firm decreases in $h$ even when $h$ exceeds the threshold $h^*(\lambda)$. 
\begin{table}[ht]
\centering
\caption{What happens when $h$ increases, given $\lambda$}
\label{table:mixed_eqn}
\begin{tabular}{|c|c|c|c|c|c|}
\hline
\multicolumn{2}{|c|}{}          & \textbf{Region 1} & \textbf{Region 2} & \textbf{Region 3} & \textbf{Region 4} \\ \hline
\multirow{3}{*}{\textbf{Q = G}} & \textbf{Price}  &    $\downarrow$      &    $\downarrow$      &      Constant     &     $\uparrow$     \\ \cline{2-6} 
                       & \textbf{Sales}  &      Constant     &     $\uparrow$     &   $\uparrow$       &      $\uparrow$    \\ \cline{2-6} 
                       & \textbf{Profit} &     $\downarrow$     &      $\downarrow$    &      $\uparrow$    &       $\uparrow$   \\ \hline
\multirow{3}{*}{\textbf{Q = B}} & \textbf{Price}  &    $\downarrow$      &     $\downarrow$     &       Constant    &     $\uparrow$     \\ \cline{2-6} 
                       & \textbf{Sales}  &      Constant     &     $\downarrow$     &      $\downarrow$    &      $\downarrow$    \\ \cline{2-6} 
                       & \textbf{Profit} &     $\downarrow$     &      $\downarrow$    &    $\downarrow$      &    $\downarrow$      \\ \hline
\end{tabular}
\end{table}

The impact of changes in precision on a high quality seller's profit, for the complete range of $\lambda$ is shown in Figure \ref{mix_figure}. The shaded area shows the regions in which the high quality firm suffers from higher precision (Regions 1 and 2 in Figure \ref{mix_figure_R4}). The dark area represents a market in the introductory stage: third-party signals (even good ones) are imprecise/unreliable. The firm heavily discounts its product and consumers enjoy a low price. In a mature market, third parties who generate signals gain experience. This improves the average precision of the signals. At low levels of $h$, increases in $h$ lead to lower profits (and prices) independent of product quality. But when $h$ exceeds the threshold $h^*(\lambda)$, only the high quality firm benefits from increases in precision. An important takeaway is that the precision range in which a high quality firm gains or suffers from better information depends on the value of $\lambda$. As a result, the market's ability to assess precision plays an important role in determining equilibrium behavior. 
\\
\begin{figure}[ht]
\centering
\caption{The impact of precision on the firm's profit when quality is high ($v_B=0.1$)}\label{mix_figure}
    \includegraphics[scale=0.45]{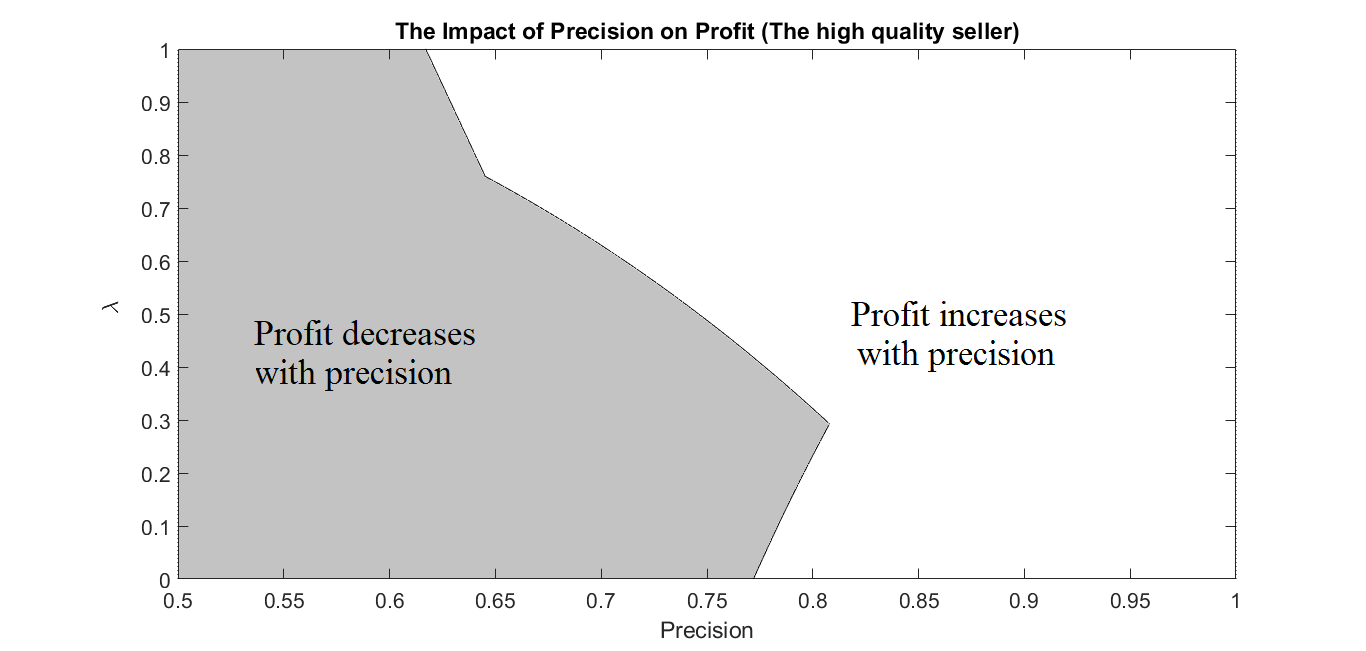}
\end{figure}

Note that Proposition \ref{mixed} holds when $v_B$ is low. If $v_B$ is close to $v_G$ (it exceeds a threshold $\bar{v}$) and the precision is high ($h$ is close to 1), the firm makes few if any sales when it is low quality. Here, the (low quality) firm might defect to $p=v_B$, and the pooling equilibrium unravels.\footnote{Defection is beneficial because at $v_B$ the firm sells to all consumers, and earns profit $v_B\times 1$ which is greater than $\tilde{p} \times E\big[\tilde{d}(\tilde{p},\sigma)\big]$.} In this situation, an equilibrium in pure strategies does not exist. In an extension (Section \ref{extensions}), we show that when $v_B>\bar{v}$, a partially separating equilibrium often exists in which the firms choose mixed pricing strategies. In this situation, the insights are qualitatively similar to the findings when $v_B<\bar{v}$.

We now fix the high precision value $h$ and examine how the equilibrium is affected by changes in $\lambda$, the fraction of market which is sophisticated. To start, we compare two extreme cases where the market is fully naive or sophisticated, i.e. $\lambda$ is 0 or 1. We then turn to a general situation where both types of consumers are present, $\lambda \in (0,1)$.

\subsubsection*{Comparison between Sophisticated and Naive Markets:}
The difference between a sophisticated market ($\lambda=1$) and a naive market ($\lambda=0$) is important because a firm often has to choose between markets and the sophistication of the market might be a key difference. We show that choosing between a sophisticated market and a naive market is not straightforward.

First, we compare the precision threshold across sophisticated and naive markets.

\begin{corollary}
\label{th_n>th_i}
The threshold precision is lower when consumers are sophisticated than when they are naive, i.e., $h^*(1)<h^*(0)$.
\end{corollary}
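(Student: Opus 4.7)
The plan is to solve the sophisticated-market threshold $h^*(1)$ in closed form, substitute that value into the naive-market indifference condition, and show that the naive firm still strictly prefers the inclusive pricing strategy; monotonicity of the two naive-profit curves in $h$ then forces $h^*(0) > h^*(1)$.

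\textbf{Step 1.} For $\lambda = 1$, Table \ref{table:WTP} collapses to three candidate prices at WTP levels $1$, $3$, and $5$, and the discussion preceding Proposition \ref{mixed} identifies the relevant transition for the high-quality firm as the jump from level $1$ (serve everyone at profit $(1-h) + h v_B$) to level $3$ (serve mass $(1+h)/2$ of the market at price $(1+v_B)/2$). Equating these profits and solving for $h$ gives
\[
h^*(1) \;=\; \frac{3 - v_B}{5 - 3 v_B}.
\]

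\textbf{Step 2.} For $\lambda = 0$, only WTP levels $2$ and $4$ are populated, so the high-quality firm's two candidate profits are $\pi_L^N(h) = 1 - \bar w (1 - v_B)$ and $\pi_H^N(h) = \bar w [\bar w + (1-\bar w) v_B]$, where $\bar w = (1+2h)/4$. Since $\pi_L^N$ is strictly decreasing in $h$ and $\pi_H^N$ is strictly increasing in $h$, the indifference point $h^*(0)$ is unique, so it suffices to show $\pi_L^N(h^*(1)) > \pi_H^N(h^*(1))$. A short rearrangement reduces this to $1 - \bar w^* > (\bar w^*)^2 (1 - v_B)$, where $\bar w^* = (11 - 5 v_B) / [4 (5 - 3 v_B)]$ is obtained by plugging $h^*(1)$ into $\bar w$. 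Clearing denominators and simplifying transforms the inequality into the cubic condition
\[
25 v_B^3 - 51 v_B^2 - 17 v_B + 59 \;>\; 0.
\]

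\textbf{Step 3.} To verify the cubic, note that its derivative $75 v_B^2 - 102 v_B - 17$ has its only positive root strictly above $1$, so the cubic is strictly monotone on $[0,1]$; its endpoint values are $59$ at $v_B = 0$ and $16$ at $v_B = 1$, both positive. The inequality therefore holds on the entire relevant range, and together with the monotonicity of $\pi_L^N$ and $\pi_H^N$ this implies $h^*(0) > h^*(1)$.

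The hard part is verifying the cubic inequality in Step 2: although the algebra is direct, one must correctly expand $(11 - 5 v_B)^2 (1 - v_B)$ and $(9 - 7 v_B)(5 - 3 v_B)$, then confirm the sign of the derivative and of both endpoint values. Everything else reduces to Bayesian updating and the single-variable indifference analysis already set up for Proposition \ref{mixed}.
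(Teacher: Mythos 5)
Your proof is correct, but it follows a genuinely different route from the paper's. The paper's own argument simply solves both indifference conditions in closed form --- $h^*(1)$ from $p_1=\tfrac{1+h}{2}p_3$ and $h^*(0)$ from $p_2=\tfrac{1+2h}{4}p_4$, the latter via the quadratic formula --- and then compares the two expressions directly. You instead compute only the sophisticated threshold, evaluate the naive market's two candidate profits at that single point, and invoke the opposite monotonicities of $\pi_L^N$ and $\pi_H^N$ in $h$ to conclude $h^*(0)>h^*(1)$. This trades the paper's comparison of a rational function against a square-root expression for a single cubic inequality in $v_B$, which you dispatch cleanly by checking that the derivative has no root in $[0,1]$ and that both endpoint values are positive; arguably this is the more transparent verification. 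I checked your algebra: $\bar w^*=\frac{11-5v_B}{4(5-3v_B)}$, the reduction to $1-\bar w^*>(\bar w^*)^2(1-v_B)$, and the expansion $4(9-7v_B)(5-3v_B)-(11-5v_B)^2(1-v_B)=25v_B^3-51v_B^2-17v_B+59$ are all right. One point worth flagging: your closed form $h^*(1)=\frac{3-v_B}{5-3v_B}$ differs from the value $\frac{3-v_B}{5-v_B}$ printed in the appendix; solving $1-h(1-v_B)=\frac{(1+h)(1+v_B)}{4}$ gives $h(5-3v_B)=3-v_B$, so your expression is the correct root of the indifference condition and the paper's appears to contain an algebra slip (the two coincide only at $v_B=0$, which is why the discrepancy is easy to miss). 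This does not affect the corollary's conclusion under either value, and your derivation is internally consistent throughout.
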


The precision threshold is a critical value above which signals may deter some consumers from buying. When precision is below the threshold, the market is fully covered and all consumers purchase. However, when the average precision exceeds a threshold, signals deter a significant fraction of consumers from buying. Specifically, when consumers are naive (sophisticated), the threshold determines the precision value above which consumers who obtain a bad signal (a precise bad signal) do not buy. Comparing the thresholds of $\lambda = 0$ and $\lambda = 1$, Corollary \ref{th_n>th_i} shows that the firm serves the whole market for a greater range of precision when consumers are naive versus sophisticated. This obtains because posterior beliefs are less dispersed in a naive market, and the low end of the market is larger and willing to pay a higher price than in a sophisticated market. That is, naive consumers who receive a precise bad signal are more profitable than corresponding sophisticated consumers. As a result, the firm serves \textit{all} consumers when they are naive versus sophisticated for a wider range of precision values.

Next, we examine how the equilibrium profits differ across sophisticated and naive markets. One might expect a firm to prefer a sophisticated market when it is high quality. But our result demonstrates that this is not always the case. 

\begin{corollary}
\label{prefer_firm}
\textbf{\\ \indent a)} If the firm is high quality, it prefers a sophisticated market to a naive market for intermediate values of $h$, while it prefers a naive market to a sophisticated market for low and high levels of $h$.\\
\indent \textbf{b)} If the firm is low quality, it prefers a sophisticated market to a naive market for high values of $h$, while it prefers a naive market to a sophisticated market for low levels of $h$.
\end{corollary}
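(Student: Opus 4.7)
The plan is to compare the pooling profits $\pi_S(Q;h)$ at $\lambda=1$ with $\pi_N(Q;h)$ at $\lambda=0$ piecewise in $h$. Proposition \ref{mixed} characterizes the equilibrium price in each pure market, and Corollary \ref{th_n>th_i} gives $h^*(1)<h^*(0)$. Together these partition $[\tfrac12,1]$ into three sub-intervals: (i) $h\le h^*(1)$, both markets fully covered (price at WTP level 1 for $\lambda=1$ and WTP level 2 for $\lambda=0$); (ii) $h^*(1)<h<h^*(0)$, the sophisticated firm has switched to WTP level 3 but the naive firm is still at WTP level 2; and (iii) $h\ge h^*(0)$, the sophisticated firm sells at WTP level 3 (losing only $\sigma_{b,h}$) while the naive firm sells at WTP level 4 (losing all $\sigma_b$). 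In each sub-interval I would write $\pi_S$ and $\pi_N$ in closed form from Proposition \ref{mixed} and sign their difference.

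For part (a), $Q=G$: in region (i), a one-line subtraction gives $\pi_N(G;h)-\pi_S(G;h)=\tfrac{2h-1}{4}(1-v_B)\ge 0$, so naive is preferred. In region (ii), $\pi_S(G;h)=(1+v_B)(1+h)/4$ is affine increasing and $\pi_N(G;h)=(3-2h)/4+(1+2h)v_B/4$ is affine decreasing, so $\pi_N-\pi_S=[2-h(3-v_B)]/4$ changes sign exactly once, at $h=2/(3-v_B)$, after which sophisticated is preferred. In region (iii), $\pi_S(G;h)$ remains affine while $\pi_N(G;h)=\bar w[\bar w+(1-\bar w)v_B]$ with $\bar w=(1+2h)/4$ is strictly convex in $h$ (second derivative $(1-v_B)/2>0$); hence $\pi_N-\pi_S$ is strictly convex. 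It is negative at $h^*(0)$ (inherited from the end of region (ii)) and, at $h=1$, equals $(1-5v_B)/16>0$ under the small-$v_B$ regime inherited from Proposition \ref{mixed}. A strictly convex function that is negative at the left endpoint and positive at the right endpoint crosses zero exactly once in between, delivering the final switch back to naive preference and producing the claimed naive--sophisticated--naive pattern in $h$.

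For part (b), $Q=B$: region (i) is identical to above, so naive is preferred. In region (ii), the analogous subtraction with $Q=B$ mass weights gives $\pi_N(B;h)-\pi_S(B;h)=[(1-h)+(3h-1)v_B]/4>0$ on $(\tfrac12,1)$, so naive remains preferred throughout. In region (iii), $\pi_S(B;h)=(1+v_B)(2-h)/4$ is affine while $\pi_N(B;h)=(1-\bar w)[\bar w+(1-\bar w)v_B]$ has leading $h^2$ coefficient $(v_B-1)/4<0$, so $\pi_N-\pi_S$ is strictly concave; it is positive at $h^*(0)$ (carried over from (ii)) and equals $-(1+3v_B)/16<0$ at $h=1$. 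Strict concavity then forces a unique zero, giving the single switch from naive to sophisticated at high $h$.

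The main obstacle is uniqueness of the crossings in regions (ii) and (iii), which I would secure by the monotonicity and strict convexity/concavity arguments above, together with continuity of each market's profit at its own coverage threshold so the three sub-intervals glue together without spurious additional zeros. A secondary caveat is that the high-$h$, $Q=G$ half of part (a) relies on $(1-5v_B)/16>0$, so the statement inherits the same $v_B\le\bar v$ assumption as Proposition \ref{mixed}; I would flag this explicitly in the proof.
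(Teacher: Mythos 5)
Your overall strategy --- piecewise comparison of the closed-form pooling profits from Proposition \ref{mixed} over the three sub-intervals cut by $h^*(1)<h^*(0)$ --- is exactly the paper's route, and your part (a) is correct: the region-(i) difference $\tfrac{(2h-1)(1-v_B)}{4}$, the region-(ii) crossing at $\underline h=2/(3-v_B)$, and the convexity argument in region (iii) (with the endpoint value $(1-5v_B)/16$, which is nonnegative precisely because $\bar v=1/5$) reproduce the paper's $\underline h$ and $\bar h$.

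There is, however, a genuine gap in part (b), region (iii). You write that $\pi_N(B;h)-\pi_S(B;h)$ is ``positive at $h^*(0)$ (carried over from (ii))'' and then locate a unique interior zero by strict concavity. But the quantity carried over is not continuous there: the pooling price is the \emph{high-quality} type's optimum, so $h^*(0)$ is defined by $p_2=\tfrac{1+2h}{4}p_4$, which makes only $\Pi^N_H$ continuous at the switch. The low-quality firm's naive profit jumps \emph{down} from $p_2$ to $\tfrac{3-2h}{4}p_4<\tfrac{1+2h}{4}p_4=p_2$, because its sales collapse from full coverage to the measure of good signals under $Q=B$. Evaluating just above the threshold (e.g.\ $v_B=0$, $h^*(0)=\tfrac{-3+2\sqrt5}{2}\approx0.736$) gives $\tfrac{3-2h}{4}\cdot\tfrac{1+2h}{4}\approx0.236$ against $\tfrac{2-h}{4}\approx0.316$, so the difference is already negative at $h^*(0)^+$; indeed for $v_B=0$ one has $\pi_N(B)-\pi_S(B)=\tfrac{-4h^2+8h-5}{16}<0$ on all of region (iii). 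Hence the switch for the low-quality firm occurs exactly at $h^*(0)$ --- as the paper states ($\tilde\Pi^S_L\ge\tilde\Pi^N_L$ for $h>h^*(0)$) and as its Figure displays via the discrete drop in the naive curve --- and there is no interior crossing for your concavity argument to find. With both endpoints negative, concavity alone also cannot sign the difference on the whole interval, so you would need to verify negativity of the quadratic directly (which holds for $v_B\le\bar v$). The qualitative conclusion of part (b) survives, but the location of, and the argument for, the switch must be repaired.
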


Figure \ref{fig:comparison} is an illustration of Corollary \ref{prefer_firm}. Note that the relative values of $\underline{h}$ and $\bar{h}$ are as follows: $0.5 \leq h^*(1) \leq \underline{h} \leq h^*(0) \leq \bar h \leq 1$.\footnote{The expressions for $\underline{h}$ and $\bar{h}$ are derived in the Appendix.} We first examine the left panel which demonstrates the firm's profit across sophisticated and naive markets for a high quality seller. To describe the intuition, we compare prices and sales across sophisticated and naive markets. When $h$ is low $h \in (\frac{1}{2},\underline{h})$ the firm prefers to market to naive consumers versus sophisticated ones. Here, the firm sells to all consumers, independent of consumer type. Because the equilibrium price is higher in a naive market than in a sophisticated market, the firm earns more in a naive market. 

When $h$ exceeds $h^*(1)$, the firm's profit is positively related to precision in the sophisticated market. Despite the inflection in the profit curve, as shown in the left panel of Figure \ref{fig:comparison}, the naive market remains more profitable until $\underline h$. In the interval $(h^*(1),\underline{h})$, the precision is low, but prices are lower in a naive market than a sophisticated market.\footnote{This is because naive market serves all consumers but sophisticated market has already switched to a price where some consumers do not buy.} This implies the quantity of sales is higher in a naive market. The higher sales more than make up the lower prices in a naive market, and consequently the firm is more profitable in a naive market.

 %
\begin{figure}[ht]
\centering
\caption{The comparison of profits across naive and sophisticated markets ($v_B=0.1$)}
  \label{fig:comparison}
    \includegraphics[scale=0.55]{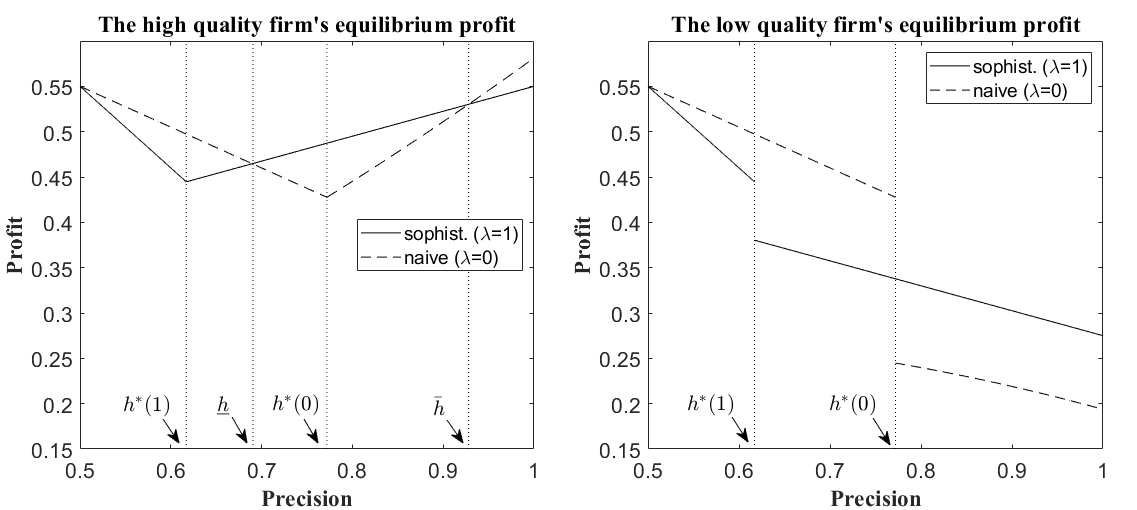}
\end{figure}

When $h$ exceeds $\underline{h}$, the difference between the high quality firm's curve in the sophisticated and in the naive market increases until $h^*(0)$. In the interval $(\underline h, h^*(0))$, the quantity of sales for the firm is smaller in a sophisticated market than in a naive market but the equilibrium price is considerably higher in the sophisticated market. Here, the price effect dominates the sales effect, and the firm prefers sophisticated consumers to naive consumers. 

When $h$ is greater than $h^*(0)$ in the left panel of Figure \ref{fig:comparison}, the firm's profit in the naive market increases with precision. Here, the slope of the profit is steeper in the naive market so that at $\bar h$ the profit curve for the naive market crosses the sophisticated market. When $h \in (h^*(0),\bar h)$, the equilibrium price is lower in a sophisticated market compared to a naive market, but sales are higher. The sales effect is stronger than the price effect so the conclusion is analogous to the previous region: the firm prefers a sophisticated market.

The truly surprising zone of the left panel in Figure \ref{fig:comparison} is $(\bar{h},1)$ where the high quality firm's profit curve for in naive market is higher than in the sophisticated market. Here, the precision of signals is very high. As a result, in a sophisticated market, many consumers obtain high precision good signals. But in a sophisticated market, the price is governed by the expected value of a product with a low precision signal. In contrast, the price in a naive market reflects the value of product with a good signal. The higher prices associated with a naive market more than counterbalance the lower level of sales. Hence, the firm is more profitable in a naive market. This result has an important implication. It means that when high precision signals are very informative, the high quality firm realizes a benefit due to the inability of naive consumers to recognize precision.

Extending this analysis to the case in which the firm is low quality, the results are different. The right panel in Figure \ref{fig:comparison} illustrates this result for the firm is low quality. In this figure, there are two regions of interest. The first region is where $h \in (\frac{1}{2},h^*(0))$. In this region, the firm's profit is higher in a naive market than in a sophisticated one. The intuition is as follows. When $h \in (\frac{1}{2},h^*(1))$ the difference in sales across naive and sophisticated market is zero. But prices are higher in a naive market. As a result, the firm prefers a naive market. When $h^*(1)<h < h^*(0)$, there is a drop in the profit for the sophisticated market due to the discrete jump in price. Here, the price is lower in a naive market, but higher sales more than make up for the lower prices. So, the conclusion is analogous: the firm earns more in a naive market.

The second region in the right panel of Figure \ref{fig:comparison} is $(h^*(0),1)$. Here, the precision is high, but interestingly the firm earns more in a sophisticated market. At the point $h^*(0)$, the price jumps in the naive market so that the firm sells less often in a naive market than in a sophisticated market.\footnote{In this region, sophisticated consumers are not deterred by uninformative bad signals. The firm sells to all consumers who receive low precision signals so the low quality firm sells more in a sophisticated market.} Despite the lower price in the sophisticated market, higher sales make the sophisticated market more profitable for the firm when it is low quality. We can interpret these results as follows. 

In the early stages of a category, the average precision of information is low. Here, a seller (either low or high quality) prefers a naive market to a sophisticated market. This is because the average WTP is higher in a naive market. In more mature categories, the average precision is higher. When precision is in an intermediate range, a high quality firm is better off in a sophisticated market. When the average precision is high, the firm prefers a sophisticated market if it is low quality and a naive market if it is high quality. This obtains because the low quality product sells more in a sophisticated market, but the high quality product is sold at a higher price in a naive market.  

\subsubsection*{Does an Increase in $\lambda$ Lead to Higher Profit?}

 As noted earlier, sophisticated consumers can assess the precision of signals. The ability to assess precision can sometimes be improved through education related to the product or by providing better information about the source of the signal (e.g. more transparency in third-parties who generate information). We define "\textbf{educating}" as an action which helps consumers to better assess the precision of signals (an increases in the fraction of sophisticated consumers). In other words, education is assumed to help the receivers of signals recognize their precision; it has no effect on the exogenous signal generating process. 


The following proposition implies that educating consumers is not beneficial to a firm unless the fraction of sophisticated consumers in the market exceeds a threshold.

\begin{prop}
\label{lambda}
\textbf{\\ \indent a)} Given $h < h^*(1)$, there exists a threshold $\bar{\lambda}(h) \in [0,1]$ such that in equilibrium, the firm's profit is decreasing in $\lambda$ for $\lambda \leq \bar{\lambda}$, and it is independent of $\lambda$ for $\lambda>\bar{\lambda}$
\textbf{\\ \indent b)} Given $h \geq h^*(1)$, there exists a threshold $\bar{\lambda}(h) \in [0,1]$ such that in equilibrium, the firm's profit is decreasing in $\lambda$ for $\lambda \leq \bar{\lambda}$, and it is increasing in $\lambda$ for $\lambda>\bar{\lambda}$.
\end{prop}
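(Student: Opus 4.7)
The plan is to exploit the fact that, for fixed $h$, the firm's pooling equilibrium profit as a function of $\lambda$ is the upper envelope of a small family of linear-in-$\lambda$ functions, hence convex. From Proposition~\ref{mixed} the pooling price must coincide with one of the WTP levels in Table~\ref{table:WTP}; writing $\pi_i(\lambda)$ for the high-quality firm's profit when it charges $p_i$ and using the $Q=G$ masses, a direct computation gives
\[
\pi_1(\lambda)=p_1,\quad \pi_2(\lambda)=p_2\Bigl(1-\tfrac{\lambda(1-h)}{2}\Bigr),\quad \pi_3(\lambda)=p_3\,\tfrac{1+2h+\lambda}{4},\quad \pi_4(\lambda)=p_4\,\tfrac{1+2h-\lambda}{4},
\]
so $\pi_1$ is constant, $\pi_2$ and $\pi_4$ strictly decrease, and $\pi_3$ strictly increases in $\lambda$ (WTP level~5 is dominated and discarded). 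The equilibrium profit $\Pi(\lambda)=\max_i \pi_i(\lambda)$ is therefore convex on $[0,1]$, so there is a (possibly non-unique) minimizer $\bar\lambda(h)\in[0,1]$ with $\Pi$ non-increasing on $[0,\bar\lambda]$ and non-decreasing on $[\bar\lambda,1]$.

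To sharpen this into the two claims I would invoke the definition of $h^*(1)$ supplied by Proposition~\ref{mixed}. In part~(a), $h<h^*(1)$ places the sophisticated market ($\lambda=1$) in Region~1, so $\pi_1\geq\pi_3(1)$; monotonicity of $\pi_3$ upgrades this to $\pi_3(\lambda)\leq\pi_1$ on all of $[0,1]$, and the bound $h^*(1)\leq h^*(0)$ from Corollary~\ref{th_n>th_i} similarly eliminates $\pi_4$. Hence $\Pi(\lambda)=\max\{\pi_2(\lambda),\pi_1\}$, which follows the strictly decreasing $\pi_2$ up to the unique crossing $\bar\lambda(h)$ defined by $p_2(1-\bar\lambda(1-h)/2)=p_1$ and is pinned at $p_1$ thereafter. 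In part~(b), $h\geq h^*(1)$ places $\lambda=1$ in Region~3, so $\pi_3(1)$ dominates there, while at $\lambda=0$ the firm sits at $\pi_2(0)$ (if $h<h^*(0)$) or $\pi_4(0)$ (if $h\geq h^*(0)$); in either subcase the upper envelope is a strictly decreasing segment followed by the strictly increasing $\pi_3$, so $\bar\lambda(h)$ is their crossing.

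The step I expect to be most delicate is a boundary issue in part~(b) when $h$ sits just above $h^*(1)$: there $\pi_1$ can briefly be the envelope's maximum, interposing a flat plateau between the $\pi_2$ (or $\pi_4$) segment and the $\pi_3$ segment. Because $\Pi$ is an upper envelope of four linear functions, convexity already rules out anything more intricate than non-increasing-then-non-decreasing with a possibly degenerate flat middle, so the proposition is to be read in the non-strict sense and $\bar\lambda(h)$ should be taken to be any point in that plateau. What remains is routine: verifying $\bar\lambda(h)\in[0,1]$ via endpoint comparisons such as $\pi_2(0)>\pi_1$ and $\pi_2(1)\leq\pi_1$ (case (a)) or $\pi_3(1)\geq\pi_2(1),\pi_4(1)$ (case (b)), and checking that the analogous statements for the low-quality profit functions go through with the $G$ and $B$ masses swapped between valences.
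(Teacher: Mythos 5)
Your convexity argument is a clean and essentially correct way to establish the result \emph{for the high-quality firm}: the paper's own proof reaches the same conclusion by enumerating the four $\lambda$-cases from the proof of Proposition~\ref{mixed} and reading off, region by region, whether the equilibrium price is $p_1$ (profit constant in $\lambda$), $p_2$ or $p_4$ (decreasing), or $p_3$ (increasing), then defining $\bar{\lambda}(h)$ piecewise as $\hat{h}_i^{-1}(h)$. Your observation that $\tilde{\Pi}_H(\cdot;\lambda)=\max_i \pi_i(\lambda)$ is an upper envelope of linear functions, hence convex, packages that case analysis more economically, and your handling of the possible flat $\pi_1$-plateau in part (b) matches what the paper's construction implicitly allows. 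Your per-piece formulas agree with the paper's.

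The genuine gap is the low-quality firm, which the proposition also covers (the paper's proof speaks of "the profits," and the surrounding text stresses the result holds "independent of product quality"). In the pooling equilibrium the price is chosen to maximize $\Pi_H$, and the low-quality firm mimics it; its equilibrium profit is therefore $\Pi_L(\tilde{p}(\lambda);\lambda)$ evaluated at the \emph{high-quality} firm's optimizer, not $\max_i \pi_i^L(\lambda)$. So the upper-envelope/convexity argument does not transfer, and your closing suggestion to rerun the argument "with the $G$ and $B$ masses swapped between valences" would prove a statement about the wrong object. The correct repair, which is what the paper's case-by-case proof delivers, is: (i) the switch points in $\lambda$ are those determined by $\Pi_H$, and along the envelope they order the active prices by increasing slope, i.e.\ a decreasing piece ($p_2$ or $p_4$), possibly the constant piece $p_1$, then the increasing piece $p_3$; (ii) at each active price the low-quality profit has the same sign of $\lambda$-monotonicity as the high-quality profit (constant at $p_1$, decreasing at $p_2$ and $p_4$ since the demand factors are $1-\tfrac{h}{2}\lambda$ and $\tfrac{3-2h}{4}-\tfrac{\lambda}{4}$, increasing at $p_3$); and (iii) the low-quality profit is generally \emph{discontinuous} at the switch points (e.g.\ at the $p_2\!\to\!p_1$ switch one has $p_2(1-\tfrac{h}{2}\lambda)<p_1=p_2(1-\tfrac{1-h}{2}\lambda)$, an upward jump), so one must check that these jumps do not disturb the claimed monotonicity on each half of $[0,1]$ --- they do not, because the jumps occur exactly at $\bar{\lambda}(h)$. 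Without steps (i)--(iii) the low-quality half of the proposition is unproved.
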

Earlier, we explained that the equilibrium price equals one of the WTPs in Table \ref{table:WTP}. As a result, the price is not affected by $\lambda$. Changes in $\lambda$ affect the profit through sales. Figure \ref{mix_figure_lambda} illustrates the result of Proposition \ref{lambda}. The firm's profit (independent of product quality) is decreasing with $\lambda$ in Region 1 (the dark area), independent of $\lambda$ in Region 2, and increasing with $\lambda$ in Region 3.  

In Region 1, the equilibrium price is primarily based on demand from naive consumers: when the precision is low, the equilibrium price is equal to the WTP of naive consumers who obtain a bad signal (WTP level 2). When the precision is high, the equilibrium price is equal to the WTP of naive consumers who obtain a good signal (WTP level 4). As noted earlier, increasing $\lambda$ does not affect the price. However, it affects the fraction of buyers because it shifts mass from WTP groups who buy to WTP groups who do not buy. This implies that sales and the firm's profit decrease with $\lambda$. Hence, there is no incentive for the firm to educate consumers in Region 1. In fact, it is the opposite; the firm would like to make assessing precision more difficult. Counter-intuitively, even when information is very precise ($h$ is close to 1) and the firm is high quality, it prefers a lower fraction of sophisticated consumers.

In Region 2, the equilibrium price is equal to the lowest WTP in the market, WTP level 1. In this situation, all consumers buy and changes in $\lambda$ do not affect the equilibrium profit. As a result, the firm's profit does not depend on $\lambda$. 
 %
\begin{figure}[ht]
\centering
\caption{The impact of consumer sophistication level on the firm's profit ($v_B=0.1$).}
  \label{mix_figure_lambda}
    \includegraphics[scale=0.5]{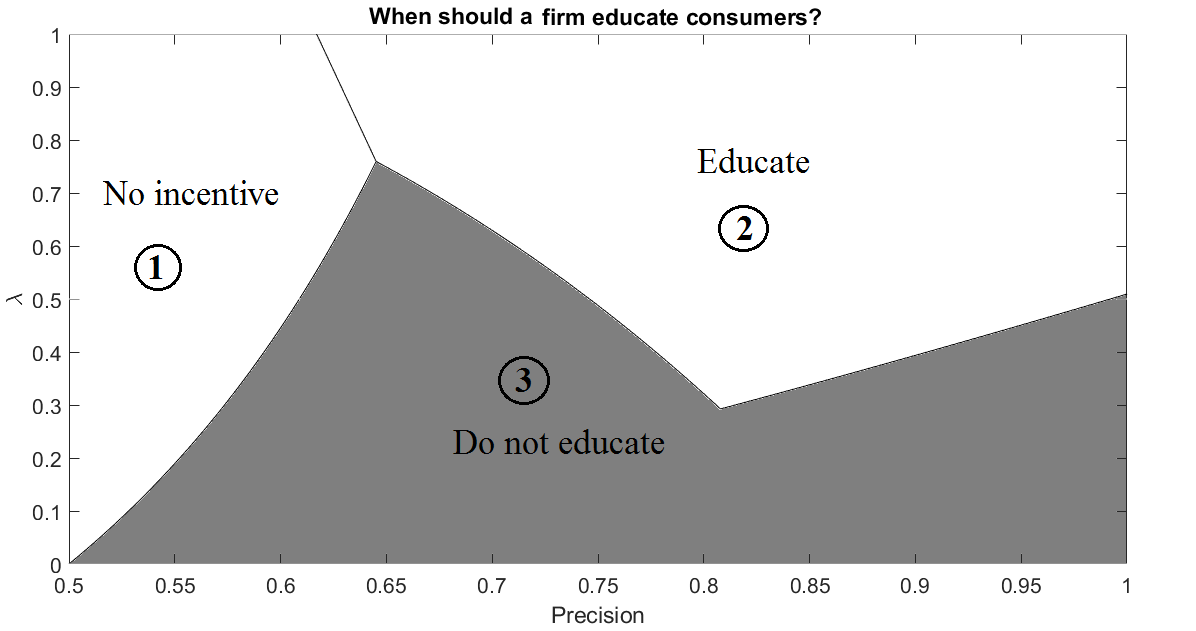}
\end{figure}
%

In Region 3, the equilibrium price equals the WTP of group 3 which corresponds to sophisticated consumers who receive a low precision signal. Here, increasing $\lambda$ increases the fraction of buyers (shifts the population mass from non-buyers to buyers). Hence, educating consumers, increases the fraction of sophisticated consumers and this increases the firm's profit, independent of quality.

\section{Extensions}
\label{extensions}
In this section, we assess the robustness of our results. To simplify, we assume all consumers are the same type and present insights for the case where all consumers are naive, $\lambda=0$.\footnote{Results are qualitatively similar if both types of consumers are present in the market.} First, we analyse a situation where the difference in consumer reservation price for the high and low quality products is small. As noted earlier, in such a situation, the pure strategy equilibrium breaks down. Later, we demonstrate the robustness of our insights to changes in the unconditional probability of a high precision signal, $Pr(w=h)$, and the priors.

\subsection{Partial Separating Equilibrium with Mixed Pricing Strategies}

 The focus of the main model is situations where there is a substantial difference between $v_B$ and $v_G$. Here, we examine market outcomes when the value of the low quality product is close to the high quality product. In such cases, we demonstrate the existence of an equilibrium in mixed strategies.

When consumers are naive, we have shown that when $v_B$ is small, $v_B \in [0,\bar{v}]$, there is a pure-strategy pooling equilibrium (Proposition \ref{mixed}). It is also straightforward to show that when $v_B$ is close to 1, $v_B \in [\bar{v}',1)$, Proposition \ref{mixed} holds.\footnote{Here, it can be shown that $\bar{v}'=\frac{5}{9}>\bar{v}$.} However, when $v_B$ is in an intermediate range, $v_B \in (\bar{v},\bar{v}')$, there are situations with no pure-strategy equilibrium. This occurs when the precision of information, $h$, is close to 1. The shaded area in Figure \ref{mixed-strat-naive} illustrates parameter values for which there is no pure-strategy equilibrium.\footnote{When $v_B$ is in an intermediate range, and $h$ is small, a (pure-strategy) pooling equilibrium as in Propositions \ref{mixed} exists. However, when $h$ is large, there is no pure-strategy equilibrium, either pooling or separating.} Here, a low quality firm has an incentive to deviate from the pooling strategy to $p = v_B$. Yet, separation cannot be an equilibrium according to Proposition \ref{no_separating}. So, the firm might adopt a mixed strategy in equilibrium. We provide an example of such an equilibrium in Proposition \ref{prop:mixed_naive}.

\begin{figure}
\centering
      \caption{Equilibrium type when consumers are naive}
  \label{mixed-strat-naive}
    \includegraphics[scale=0.5]{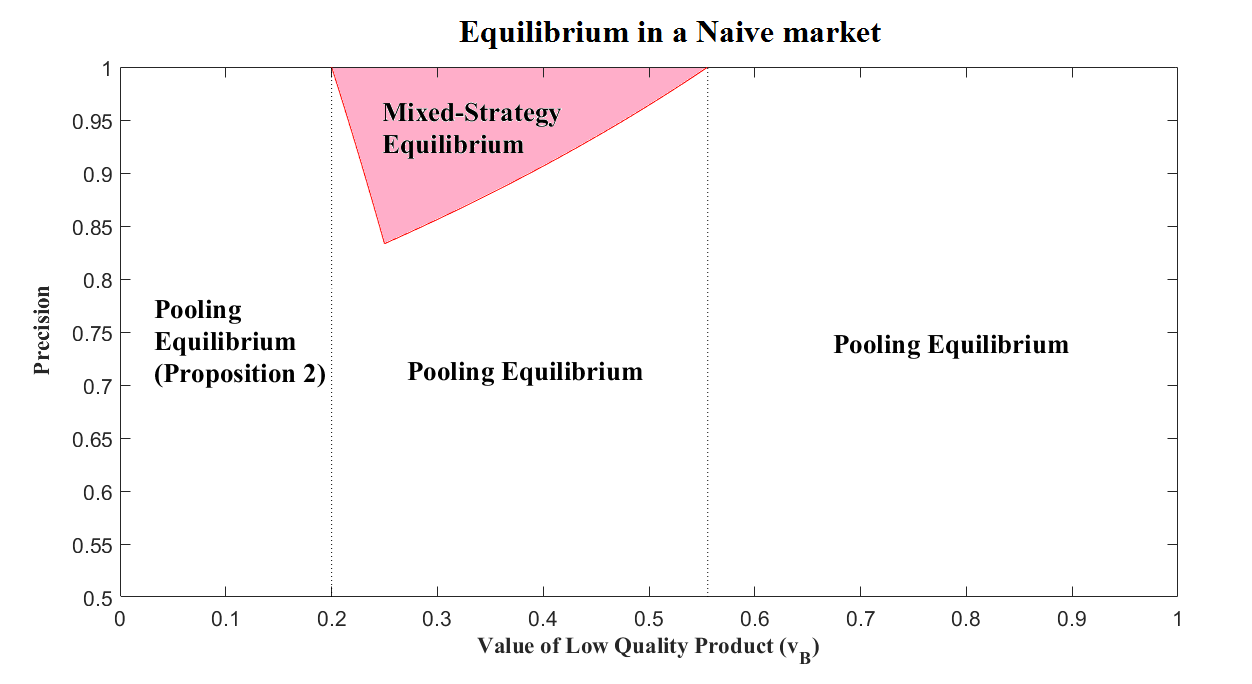}
\end{figure}

\begin{prop}
\label{prop:mixed_naive}
Suppose all consumers are naive. Given $h$, if $v_B \in (\frac{3-2h}{7-2h},\frac{3-2h}{4})$, there is a mixed-strategy equilibrium in which the firm charges a high price when it is high quality, and it mixes between a low price and a high price when it is low quality.
\end{prop}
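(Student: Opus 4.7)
The plan is to construct a semi-separating equilibrium with the claimed structure and then verify its defining conditions in turn. Posit the following strategies: the high quality firm charges $\bar{p}$ with probability one; the low quality firm charges $\bar{p}$ with probability $\alpha$ and $v_B$ with probability $1-\alpha$. Consumers who observe the low price $v_B$ purchase the product, because consistent beliefs together with the fact that the high type never charges $v_B$ force $Pr(G\,|\,p=v_B)=0$, and $v_B$ is precisely the low-quality WTP. Consumers who observe $\bar{p}$ purchase if and only if they received the favorable signal $\sigma_g$; those who received $\sigma_b$ abstain. The unknowns $\bar{p}$ and $\alpha$ will be pinned down by two indifference conditions.

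First, I would pin down $\bar{p}$ from the low type's mixing indifference. Charging $v_B$ nets profit $v_B$ because all consumers buy, whereas charging $\bar{p}$ nets $\bar{p}\cdot Pr(\sigma_g\,|\,Q=B)=\bar{p}(1-\bar{w})=\bar{p}(3-2h)/4$. Equating the two gives
\[
\bar{p}=\frac{4\,v_B}{3-2h},
\]
which strictly exceeds $v_B$ for $h>1/2$ and does not exceed $1$ whenever $v_B\leq(3-2h)/4$. Second, I would pin down $\alpha$ from the marginal ($\sigma_g$) consumer's indifference. Applying Bayes' rule to the conjectured mixing yields $\tilde{\mu}(\sigma_g,\bar{p})=\frac{\bar{w}}{\bar{w}+\alpha(1-\bar{w})}=\frac{1+2h}{(1+2h)+\alpha(3-2h)}$; setting $\tilde{\mu}(\sigma_g,\bar{p})(1-v_B)+v_B=\bar{p}$ and solving gives
\[
\alpha=\frac{(3-2h)-4\,v_B}{(3-2h)\,v_B}.
\]
The inequalities $\alpha>0$ and $\alpha<1$ rearrange exactly to $v_B<\frac{3-2h}{4}$ and $v_B>\frac{3-2h}{7-2h}$, producing the range announced in the proposition.

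Third, I would verify the remaining best-response and belief conditions. The $\sigma_b$ consumer strictly prefers not to buy at $\bar{p}$ because her posterior is strictly below that of the $\sigma_g$ consumer, who is exactly indifferent. The low type is indifferent by construction; the high type's profit at $\bar{p}$ is $\bar{p}\cdot\bar{w}=v_B(1+2h)/(3-2h)\geq v_B$ for $h\geq 1/2$, so charging $\bar{p}$ weakly dominates the natural deviation to $v_B$. It remains to address off-path prices $p\notin\{v_B,\bar{p}\}$: Lemma \ref{lem_price_interval} rules out $p\notin[v_B,1]$, and for each admissible off-path $p$ I would apply the Intuitive Criterion, assigning posterior weight to the type for whom the deviation is not equilibrium-dominated and checking no profitable deviation survives. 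The delicate step is this last one, since the continuum of admissible off-path prices must be partitioned according to which consumer segment they target, and one must verify simultaneously that (i) no intermediate price makes the high type strictly better off than $\bar{p}$, and (ii) no full-coverage price other than $v_B$ makes the low type strictly prefer a pure strategy to the mixture.
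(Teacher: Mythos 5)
Your construction coincides with the paper's: the same two equations (the low type's indifference $\frac{3-2h}{4}\bar{p}=v_B$ and the requirement that $\bar{p}$ equal the good-signal consumer's posterior willingness to pay under the Bayes-consistent belief $\tilde{\mu}(\sigma_g,\bar{p})=\frac{\mu(\sigma_g)}{\mu(\sigma_g)+(1-\mu(\sigma_g))\alpha}$) yield the same $\alpha^*=\frac{1}{v_B}-\frac{4}{3-2h}$, and the feasibility condition $\alpha^*\in(0,1)$ produces exactly the stated interval for $v_B$. The off-path verification you flag as the delicate remaining step is treated no more fully in the paper itself, which simply asserts via the Intuitive Criterion that deviations are not beneficial, so your proposal is essentially the paper's proof.
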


We identify a mixed-strategy equilibrium in which the low quality firm mixes between a low price ($v_B$) and a high price ($\bar{p}$), and the high quality firm employs a pure strategy and charges a high price ($\bar{p}$).\footnote{Here, the high price reflects the WTP of consumers who get a good signal, $\sigma_{g}$. This is consistent with the restriction that players do not play an equilibrium dominated strategy.} This implies that a low price is perfectly informative, in the sense that consumers are certain that the product is low quality after observing a low price. However, consumers' uncertainty is not resolved after observing a high price, and they form beliefs consistent with the equilibrium mixing strategies; the high price, $\bar{p}$, is an endogenous value which is determined based on the mixing probabilities. 

Without loss of generality, we assume the consumer first obtains a private signal, $\sigma$, and then observes the price. Denote the consumer prior after observing the signal by $\mu(\sigma)=Pr(G|\sigma)$, $\sigma \in \{\sigma_{g},\sigma_{b}\}$, and after observing the price by $\tilde{\mu}(\sigma,p)$. The timing of the game is as follows: first, a firm learns its quality. If it is high quality, it charges a high price, $\bar{p}$. If it is low quality, it mixes between a high price $\bar{p}$ and a low price $v_B$, i.e., it sets $\bar{p}$ with probability $\alpha$ and $v_B$ with probability $1-\alpha$. Consumers observe a private signal and the price, update their beliefs to $\tilde{\mu}(\sigma,p)$, and decide whether or not to purchase the product. Here, when a consumer observes a low price, she is certain that the product is low quality and purchases the product. However, when she observes a high price, she does not know whether the product is high or low quality, because the low quality firm sometimes sets a high price.\footnote{In contrast to Tables \ref{Posterior_sophisticated} and \ref{Posterior_naive}, here the price is an important factor in determining consumers' posterior beliefs. In a pooling equilibrium, the posterior is only a function of the signal.} Accordingly, she updates her prior based on the equilibrium mixing probabilities. So, consistency of beliefs implies that,
\begin{equation}
\label{eqm_beliefs}
    \tilde{\mu}(\sigma,\bar{p}) = Pr(G|\bar{p},\sigma) = \frac{Pr(\bar{p}|G,\sigma)Pr(G|\sigma)}{Pr(\bar{p}|\sigma)} = \frac{\mu(\sigma)}{\mu(\sigma)+(1-\mu(\sigma))\alpha} \text{ } \text{ , } \text{ } \tilde{\mu}(\sigma,v_B) = 0
\end{equation}

When a consumer observes a high price, she only buys if she obtains a good signal. Hence, the maximum WTP in the market is $\bar{p}=\tilde{\mu}(\sigma_g,\bar{p})v_G+(1-\tilde{\mu}(\sigma_g,\bar{p}))v_B$. The high quality firm plays $\bar{p}$ in equilibrium because deviating to other prices is not beneficial. The low quality firm, however, mixes between $v_B$ and $\bar{p}$. Note that when the low quality firm sets a high price, it trades off the windfall of high price against the probability of not selling because when the firm is low quality, it is significantly more likely that a consumer receives a bad signal that deters her from buying. As a result, the low quality firm is indifferent between the low and the high price in equilibrium.
\begin{equation}
\label{indifference}
    \frac{3-2h}{4} \bar{p} = v_B 
\end{equation}
Solving for $\alpha$ leads to $\alpha^*=\frac{1}{v_B}-\frac{4}{3-2h}$. Substituting this result into the profit functions, we find that the high quality firm's profit increases in $h$, which is consistent with the results of Section \ref{base_model}. However, the low quality firm's profit is independent of $h$: the mixed strategy leads to profit of $v_B$ for the low quality firm.

\subsection{Changes in the Unconditional Probability of High Precision}

In the base model, we model improvements in information precision as an increase in the value of high precision, $h$. Alternatively, an improvement in precision might be driven by an increase in the unconditional probability of receiving a high-precision signal. In the base model, the probability is fixed at $\frac{1}{2}$. Here, we relax this assumption and investigate the effect of an exogenous increase in this unconditional probability, holding the value of $h$ fixed.

Consider the case where $v_B=0$ and $Pr(w=h)=1-Pr(w=l)=\gamma$, where $\gamma \in (0,1)$.\footnote{The insights are qualitatively similar for $v_B > 0$, as long as $v_B$ is not too large.} In terms of consumer beliefs, this is similar Section \ref{base_model}, except that the average precision, $\bar{w}$, changes from $\frac{h+l}{2}$ to $\gamma h +(1-\gamma) l$. So, Tables \ref{conditional_p_soph}-\ref{conditional_p_naive} apply but the naive consumers' posterior beliefs are provided in Table \ref{Posterior_naive_new}:\footnote{Note that the conditional probabilities of each valence remain the same as in Tables \ref{conditional_p_soph} and \ref{conditional_p_naive}. However, the signal-specific conditional probabilities, $Pr(\sigma_{q,w}|Q)$, change. Conversely, the posteriors of sophisticated consumers do not change because $Pr(w)$ is in both the numerator and the denominator.}
\begin{table}[ht]
\centering
\caption{Posterior beliefs of naive consumer}
\label{Posterior_naive_new}
\begin{tabular}{@{} ccc @{}}
\hline
\ & Pr($G|.$) & Pr($B|.$) \\
\midrule
$\sigma_g$ & ${{1+\gamma(2h-1)} \over 2}$ & ${{1-\gamma(2h-1)} \over 2}$  \\
\midrule
$\sigma_b$ &${{1-\gamma(2h-1)} \over 2}$ & ${{1+\gamma(2h-1)} \over 2}$ \\
\bottomrule
\end{tabular}
\end{table}
\\
The fraction of high and low precision signals is not fixed at 50/50 in this extension. Here, a $\gamma$ fraction of consumers obtain high precision signals and the remaining $1-\gamma$ obtain low precision signals. When $\gamma$ changes, the signal-specific conditional probabilities, $Pr(\sigma_{q,w}|Q;w)$, and the consequent expected demand functions in \eqref{p} change. In the following, we present comparative statics with respect to $\gamma$.

\begin{prop}
\label{prop:unconditional-naive}
Suppose all consumers are naive. Given $h \in (0.5,1)$, the following hold in equilibrium:\\ 
$(a)$ When the firms is low quality, its profit decreases in $\gamma$ for all $\gamma \in (0,1)$. \\ 
$(b)$ When the firms is high quality, if $h$ is small, the profit decreases in $\gamma$, for all $\gamma \in (0,1)$. \\
$(c)$ When the firms is high quality, If $h$ is in medium range, the profit first decreases and then increases in $\gamma$.\\
$(d)$ When the firms is high quality, If $h$ is large, the profit increases in $\gamma$, for all $\gamma \in (0,1)$.

\end{prop}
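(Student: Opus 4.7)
The plan is to reduce the firm's pooling pricing problem to a binary choice between two candidate prices and then trace profit monotonicity in $\gamma$ via the resulting switching threshold. With $v_B=0$ and all consumers naive, Table \ref{Posterior_naive_new} yields only two distinct posterior WTPs: $\bar w := \tfrac12+\gamma(h-\tfrac12)$ for a $\sigma_g$-receiver and $1-\bar w$ for a $\sigma_b$-receiver. Any price strictly between $1-\bar w$ and $\bar w$ is dominated by $\bar w$, and any price above $\bar w$ yields zero sales, so in any pure-strategy PBE the firm charges either $p_L = 1-\bar w$ (serving all) or $p_H = \bar w$ (serving only the $\sigma_g$ segment). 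Since Proposition \ref{no_separating} rules out separation, we focus on pooling, and the Intuitive Criterion selects the price that maximizes $Q=G$ profit (the same refinement used in Proposition \ref{mixed}).

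Next I compute profits in closed form. Under $p_L$, the firm earns $1-\bar w$ regardless of its quality. Under $p_H$, it earns $\bar w \cdot \Pr(\sigma_g \mid Q)$, which equals $\bar w^2$ when $Q=G$ and $\bar w(1-\bar w)$ when $Q=B$, using $\Pr(\sigma_g\mid G) = \bar w$ and $\Pr(\sigma_g\mid B) = 1-\bar w$. The high-quality firm prefers $p_H$ to $p_L$ iff $\bar w^2 > 1-\bar w$, i.e.\ iff $\bar w > \phi := (\sqrt{5}-1)/2$. Since $\bar w$ is strictly increasing in $\gamma$, this yields a unique switching threshold $\gamma^*(h) = (\phi - \tfrac12)/(h - \tfrac12)$ whenever $h>\phi$; for $h\le\phi$ the threshold exceeds one and $p_L$ is always chosen.

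Part (a) then follows because both $1-\bar w$ and $\bar w(1-\bar w)$ strictly decrease in $\bar w$ on $(\tfrac12,1)$, and at $\gamma = \gamma^*(h)$ the low-quality payoff jumps downward from $1-\bar w$ to $\bar w(1-\bar w)$ (a factor of $\bar w<1$); hence low-quality profit is monotonically decreasing in $\gamma$. For the high-quality firm, the three sub-cases emerge directly from where $\gamma^*(h)$ sits relative to $(0,1)$: for (b), if $h\le\phi$ then $\bar w\le h\le\phi$ throughout, $p_L$ is always used, and profit $1-\bar w$ falls in $\gamma$; for (c), if $\gamma^*(h)\in(0,1)$ the firm uses $p_L$ and then switches to $p_H$, producing the U-shape in which the decreasing $1-\bar w$ is replaced by the increasing $\bar w^2$; for (d), as $h$ grows large the threshold $\gamma^*(h)$ shrinks (bottoming out at $\sqrt{5}-2$ when $h=1$), so across the relevant range $p_H$ prevails and profit $\bar w^2$ increases in $\gamma$.

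The main obstacle is essentially bookkeeping rather than deep technical difficulty: I must verify that no alternative on-path deviation (notably $p=v_B=0$, or prices that try to exploit off-path belief updates) is profitable for either type, which is handled by the same Intuitive-Criterion arguments used in Proposition \ref{no_separating}. Because $v_B = 0$ keeps the model safely outside the intermediate-$v_B$ regime that produced the mixed-strategy construction of Proposition \ref{prop:mixed_naive}, the pure-strategy pooling PBE suffices throughout.
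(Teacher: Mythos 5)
Your proof takes essentially the same route as the paper's: reduce the pooling problem to the two candidate prices $1-\bar w$ and $\bar w$, compute the profits $1-\bar w$, $\bar w^{2}$ and $\bar w(1-\bar w)$, and locate the high-quality firm's switch at $\bar w=(\sqrt5-1)/2$; your $\gamma^{*}(h)$ is exactly the paper's $\hat\gamma$, your case $h\le(\sqrt5-1)/2$ is the paper's $\underline h$, and your treatment of part (a), including the explicit downward jump from $1-\bar w$ to $\bar w(1-\bar w)$ at the switch, is if anything slightly more careful than the paper's. The one soft spot is part (d): your own formula gives $\gamma^{*}(h)\ge\sqrt5-2>0$ even at $h=1$, so an initial decreasing segment never disappears and the claim ``increases for all $\gamma\in(0,1)$'' does not follow from ``across the relevant range $p_H$ prevails''; but the paper's proof has the identical weakness (its $\bar h$ is just the point where $\gamma^{*}=1/2$, and its assertion that $\bar w p_2>p_1$ for $h>\bar h$ fails near $\gamma=0$), so you have faithfully reproduced the argument, imprecision included, rather than introduced a new gap.
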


The intuition behind Proposition \ref{prop:unconditional-naive} is as follows. From the perspective of naive consumers, there are two types of signals, good signals and bad signals. A firm needs to choose between two strategies: either set a low price and sell to all consumers, or set a high price and sell only to consumers who obtain a good signal. A change in $\gamma$ impacts both the mass of consumers who obtain each of these signals and their WTPs.  

When the quality of the firm is low, an increase in $\gamma$ has two effects: 1) it shifts mass from high WTP to low WTP (increases the mass of people who obtain a bad signal and decreases the mass of people who obtain a good signal), and 2) it increases the dispersion in WTPs. Thus, the result is similar to the base model, the firm's profit decreases in $\gamma$.

However, when the quality of the firm is high, the behavior of profit in $\gamma$ depends on the value of $h$. When $h$ is small, the dispersion in consumer WTPs is so low that, independent of the value of $\gamma$, the firm sets a low price, $p = 1-\bar{w}={{1-\gamma(2h-1)} \over 2}$, and sells to all consumers. That is, the firm's profit is ${{1-\gamma(2h-1)} \over 2}$ for all $\gamma \in (0,1)$. This implies that when quality is high, the firm's profit decreases in $\gamma$ when $h$ is small.

When $h$ is in a medium range, and the firm's quality is high, $\gamma$ has a significant effect on the firm's strategy. When $\gamma$ is low, the firm serves everyone by setting a low price because the dispersion in consumer valuations is narrow. However, when $\gamma$ is high, the firm switches to a high price and sells only to consumers who obtain a good signal because the dispersion in consumer valuations is large. Thus, the firm's profit first decreases and then increases in $\gamma$.

Finally, when $h$ is large and the firm's quality is high, the dispersion in consumers' valuations is so high that, independent of the value of $\gamma$, the firm sets a high price and only sells to consumers with good signals. In this case, the firm's profit increases in $\gamma$ for all $\gamma \in (0,1)$.

\subsection{The Role of the Consumer Prior}
Here, we examine how different priors affect the results of the base model. In Section \ref{base_model}, the consumer's prior, $Pr(G)$, is normalized to 0.5. Here, we relax that assumption by allowing the prior to be flexible between 0 and 1, $0\leq Pr(G)=\mu_0\leq 1$. The consumer's posterior in this situation is derived as, $$Pr(G|\sigma) = \frac{\mu_0 Pr(\sigma|G)}{\mu_0 Pr(\sigma|G)+(1-\mu_0)(1-Pr(\sigma|G))}$$
The expected demand and the firm's profit function can be derived straightforwardly. The equilibrium described in Proposition \ref{prior} is qualitatively similar to that of Section \ref{base_model}.

\begin{prop}
\label{prior}
Suppose all consumers are naive. As long as $v_B$ is not too large, and $\mu_0$ is not too small, there is a threshold $h^*(\mu_0) \in (\frac{1}{2},1)$, such that in equilibrium:\\
(a) If the firm is high quality, its profit decreases in precision when $h<h^*$, and increases in precision when $h>h^*$\\
(b) If the firm is low quality, its profit is decreasing in precision for all $h \in [0.5,1]$
\end{prop}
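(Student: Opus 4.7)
The plan is to extend the proof of Proposition \ref{mixed} (specialized to $\lambda=0$) to a general prior $\mu_0$, tracking $\mu_0$ through the Bayesian updating and then rerunning the equilibrium price selection and the comparative statics.

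First, I would compute the naive posteriors with prior $\mu_0$ and average precision $\bar w = \tfrac{1+2h}{4}$:
\begin{align*}
Pr(G\mid\sigma_g) &= \frac{\mu_0 \bar w}{\mu_0 \bar w + (1-\mu_0)(1-\bar w)},\\
Pr(G\mid\sigma_b) &= \frac{\mu_0(1-\bar w)}{\mu_0(1-\bar w) + (1-\mu_0)\bar w}.
\end{align*}
Both are monotone in $\bar w$ independent of $\mu_0$ (increasing for $\sigma_g$, decreasing for $\sigma_b$), and the willingness-to-pay levels $WTP_g(h)$ and $WTP_b(h)$ inherit this monotonicity.

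Second, by Proposition \ref{no_separating} the equilibrium is pooling, and after eliminating dominated prices only two candidates remain: $p_b = WTP_b(h)$, which sells to all consumers, and $p_g = WTP_g(h)$, which sells only to those receiving $\sigma_g$. The corresponding expected profits are
\begin{align*}
\pi(p_b) &= WTP_b(h), \\
\pi(p_g\mid Q=G) &= \bar w\cdot WTP_g(h), \\
\pi(p_g\mid Q=B) &= (1-\bar w)\cdot WTP_g(h).
\end{align*}
Adopting the same equilibrium selection as in the base model (the price that maximizes the high quality firm's profit), define $h^*(\mu_0)$ implicitly by $WTP_b(h^*) = \bar w(h^*)\cdot WTP_g(h^*)$. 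I would verify that, provided $\mu_0$ is not too small, the LHS dominates at $h=\tfrac12$ (where both WTPs coincide so the factor $\bar w = 1/2$ strictly reduces the RHS) while the RHS dominates at $h=1$; continuity then yields a unique interior threshold.

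Third, for the comparative statics: part (a) is immediate because for $h<h^*$ the high quality firm earns $WTP_b(h)$, decreasing in $h$, whereas for $h>h^*$ it earns $\bar w\cdot WTP_g(h)$, a product of two factors both increasing in $h$. Part (b) requires more care. For $h<h^*$ the low quality firm also earns $WTP_b(h)$, which is decreasing; at $h^*$ the equilibrium switches from $p_b$ to $p_g$ and the low quality firm's profit drops \emph{discontinuously} from $WTP_b(h^*)=\bar w\cdot WTP_g(h^*)$ down to $(1-\bar w)\cdot WTP_g(h^*)$, preserving the decreasing pattern. For $h>h^*$ I would differentiate $(1-\bar w)\cdot WTP_g(h)$ directly: writing $x=\bar w$ and expanding, the derivative reduces to an expression whose sign on $x\ge 1/2$ is governed by a nonpositive $(1-2x)$ term and an additional $-v_B$ contribution, both nonpositive, yielding the decrease. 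Finally, existence of the pooling equilibrium itself requires that the low quality firm not strictly prefer deviating to $p=v_B$; the condition ``$v_B$ not too large'' ensures this, exactly as in the derivation of $\bar v$ in Proposition \ref{mixed}.

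The main obstacle is the low-quality monotonicity argument above $h^*$: the symmetry that simplifies the $\mu_0=\tfrac12$ case disappears, so one must establish the sign of the derivative of $(1-\bar w)\cdot WTP_g(h)$ through explicit algebra in both $\bar w$ and $\mu_0$. A secondary subtlety is that both thresholds (the switching point $h^*$ and the pooling/defection boundary analogous to $\bar v$) now depend on $\mu_0$, so the restrictions ``$\mu_0$ not too small'' and ``$v_B$ not too large'' must be interpreted jointly as placing the parameter pair in the region where a well-defined interior $h^*$ exists and defection is unprofitable.
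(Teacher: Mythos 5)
Your construction is the same as the paper's: the two candidate prices you call $WTP_b$ and $WTP_g$ are exactly the paper's $\underline{p}$ and $\bar{p}$, the threshold is defined by the same crossing condition $\underline{p}=\frac{1+2h}{4}\bar{p}$, and the high-quality comparative statics in part (a) follow just as you say. The gap is in part (b) above the threshold. Your claim that the derivative of $(1-\bar w)\,WTP_g(h)$ is "governed by a nonpositive $(1-2x)$ term and an additional $-v_B$ contribution, both nonpositive" is false in general. Take $v_B=0$ and write the low-quality profit above $h^*$ as $\frac{3-2h}{4}\cdot\frac{\mu_0(1+2h)}{\mu_0(1+2h)+(1-\mu_0)(3-2h)}$. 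For small $\mu_0$ this is approximately $\frac{\mu_0(1+2h)}{4(1-\mu_0)}$, which is \emph{increasing} in $h$ (and one can check that $h^*(\mu_0)$ remains interior there, so the region is nonempty). So the low-quality monotonicity you assert unconditionally actually fails for small priors.

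This is precisely the role of the hypothesis "$\mu_0$ not too small," and it is the one place where your proposal misallocates the assumptions. The paper defines $\underline{\mu}$ as the smallest prior for which $\frac{\partial}{\partial h}\bigl(\frac{(3-2h)\bar{p}}{4}\bigr)<0$ holds on all of $(h^*(\mu_0),1)$, and restricts to $\mu_0>\underline{\mu}$ exactly to secure part (b); the bound $\bar{v}$ on $v_B$ is then built on top of that to rule out defection to $p=v_B$. You instead attribute "$\mu_0$ not too small" to the existence of an interior threshold and to the no-defection constraint. But the interior threshold exists for essentially all priors (at $h=\tfrac12$ the two candidate profits are $WTP_b$ and $\tfrac12 WTP_b$, so full coverage always wins there, while the high price wins near $h=1$), and defection is controlled by $v_B$. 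To close the proof you must make the monotonicity threshold $\underline{\mu}$ explicit and condition part (b) on $\mu_0>\underline{\mu}$, rather than asserting the sign of the derivative for all admissible $\mu_0$.
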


This proposition implies that the main result of the base model is robust to a change in the prior of the product being high quality. When the prior changes, consumers' WTPs change in the same direction. Therefore, the equilibrium price and profit change accordingly. A higher prior is beneficial to the firm regardless of whether it is high or low quality, because it enhances the posteriors which in turn increases the expected value of the product to consumers. However, it does not change the qualitative behavior of the firm's equilibrium profits with respect to information precision.

\section{Conclusions and Implications}
\label{conclusion}
Recent developments in information technology have significantly influenced information transmission in markets. There are many ways to obtain information about product quality. Information is easily transmitted to consumers from third-parties such as friends. In addition, online spokespeople have become important sources of information about product quality. At the same time, people differ in their ability to assess the reliability of information. As a result, consumers might reach different conclusions upon receiving the same information. 
 
We use a monopoly model to investigate how firm strategy and incentives change with the precision of information and consumer heterogeneity in assessing precision. Our first result shows that, given a combination of sophisticated and naive consumers in the market, when the firm is high quality, as the (average) precision of information increases, the profit first decreases and then increases. However, when the firm's quality is low, the profit always decreases in precision. Importantly, the precision range in which the high quality firm suffers from more precise information depends on how skilled consumers are at recognizing precision. Information transmitted among consumers is relatively inaccurate in the early stages of a category. Thus, when a category is in its introductory stage, information precision is low. Here, the incentives of a firm do not favor improving information precision. In fact, it might be the opposite. Perversely, a firm might gain by spamming imprecise information online. From a welfare perspective, there might be a need for policy intervention to prevent the manipulation of information by firms. In more mature markets, information precision is higher. We find that when the information precision is high, the firm's incentives favor enhancing information precision when product quality is high. Improving precision can be done, for instance, by providing free samples and targeting experts.

When a firm introduces an innovative product, it should target markets which provide the greatest profit. Consumers across markets are often different in their ability to process information. Consumers in one market might be good at distinguishing the precision of third-party information, but consumers in another might not. Our second result shows that if the firm is high quality, it prefers a sophisticated market to a naive market when precision is in an intermediate range. However, it is not always beneficial for a high quality firm to sell its product to a market where consumers are skilled at assessing information precision (e.g. there is transparency in information sources). Surprisingly, when good signals are highly precise and the quality of the firm is high, it prefers a naive market to a sophisticated market. The high price in a naive market more than makes up for lower sales. We also find that when good signals are highly precise and the firm's quality is low, it prefers a sophisticated market. The simple reason is that it does not lose sales because of low precision signals.
 
Our third result concerns the firm's incentive to improve consumers' ability to assess precision. When a category is in its introductory stages, consumers are unsophisticated. The market is new and people have limited ability to communicate about product quality. Hence, the fraction of sophisticated consumers (who can assess precision) is often small. In this case, we find that a firm has no incentive to increase the fraction of sophisticated consumers, even if it is high quality. In mature categories, people are more familiar with the category and the terms used to describe it. This is particularly true of electronics and appliances. As an example, it is difficult to think of discussing GBs of memory and pixels for TV sharpness 25 years ago. The public’s ability to process information is higher in a mature category (the fraction of sophisticated consumers is high). Here, the firm's incentives favor improving consumers' ability to assess precision. The firm can do this in two ways. First, it can facilitate the identification of high precision signals, for instance, by incentivizing third-parties who provide precise information and raising transparency in information sources. Second, the firm can educate consumers by sponsoring training seminars or by facilitating hands-on experimentation. When consumers are better informed about a product, they can better assess the precision of signals they encounter.

We show that our results are robust to situations where the gap between high quality and low quality is small. Here, when the precision of signals is high, there is no pure-strategy equilibrium and the equilibrium entails the low quality firm choosing a mixed pricing strategy. Even here, the results are qualitatively similar to the main model. The notable difference between the mixed strategy outcome and the base case is that when the quality is low, the firm's profit is independent of information precision. Finally, we investigate the effect of changes in the fraction of high precision signals. We show that the results are qualitatively similar to the base model.

The results of this study help to explain what happens in new categories over time. At first, a product is heavily discounted and consumers enjoy a low price. Improvements in information precision may lead to lower prices (this finding is independent of product quality). Even \textit{without competition}, we should see price decreases in the introductory stage of a new product. Furthermore, when a category is new, a high quality firm does not benefit from increasing the transparency of information sources and educating consumers. However, beyond a certain level of maturity, the inverse is true: information is precise and the ability of consumers to distinguish precise signals from imprecise signals is higher. Here, not only is a high quality firm more profitable, but the profit of the high quality firm increases as the fraction of the market who can assess precision grows. In other words, if "the conditions are right", a high quality firm can benefit by educating consumers, improving the transparency of information sources, and encouraging users to generate high precision signals. 
\newpage
\bibliographystyle{apalike}
\bibliography{mylib}

\begin{thebibliography}{}

\bibitem[Akerlof, 1970]{Akerlof70}
Akerlof, G.~A. (1970).
\newblock The market for ‘lemons’: Quality uncertainty and the market
  mechanism.
\newblock {\em Quarterly Journal of Economics}, 84:488–500.

\bibitem[Anderson and Simester, 1998]{AndersonSimester98}
Anderson, E.~T. and Simester, D.~I. (1998).
\newblock The role of sale signs.
\newblock {\em Marketing Science}, 17(2):139–155.

\bibitem[Bagwell and Riordan, 1986]{BagwellRiordan86}
Bagwell, K. and Riordan, M. (1986).
\newblock High and declining prices signal product quality.
\newblock {\em American Economic Review}, 81(1):224–39.

\bibitem[Bergemann and Morris, 2019]{BergemannMorris19}
Bergemann, D. and Morris, S. (2019).
\newblock Information design: A unified perspective.
\newblock {\em Journal of Economic Literature}, 57(1):44--95.

\bibitem[Bergemann and Valimaki, 1997]{BergemannValimaki97}
Bergemann, D. and Valimaki, J. (1997).
\newblock Market diffusion with two-sided learning.
\newblock {\em RAND Journal of Economics}, 28(4):773--795.

\bibitem[Blackwell, 1951]{Blackwell51}
Blackwell, D. (1951).
\newblock The comparison of experiments.
\newblock {\em Proceedings of the Second Berkeley Symposium on Mathematical
  Statistics and Probability}, University of California Press,
  Berkeley:93--102.

\bibitem[Bose et~al., 2008]{Boseetal08}
Bose, S., Orosel, G., Ottaviani, M., and Vesterlund, L. (2008).
\newblock Monopoly pricing in the binary herding model.
\newblock {\em Economic Theory}, 37:203--241.

\bibitem[Branco et~al., 2016]{Brancoetal16}
Branco, F., Sun, M., and Villas-Boas, J.~M. (2016).
\newblock Too much information? information provision and search costs.
\newblock {\em Marketing Science}, 35(4):605–618.

\bibitem[Chen and Xie, 2005]{ChenXie05}
Chen, Y. and Xie, J. (2005).
\newblock Third-party product review and firm marketing strategy.
\newblock {\em Marketing Science}, 24(2):218–240.

\bibitem[Chen and Xie, 2008]{ChenXie08}
Chen, Y. and Xie, J. (2008).
\newblock Online consumer review: Word-of-mouth as a new element of marketing
  communication mix.
\newblock {\em Marketing Science}, 54(3):477–491.

\bibitem[Cho and Kreps, 1987]{ChoKreps87}
Cho, I.-K. and Kreps, D.~M. (1987).
\newblock Signaling games and stable equilibria.
\newblock {\em The Quarterly Journal of Economics}, 102(2):179--221.

\bibitem[Chu and Chu, 1994]{ChuChu94}
Chu, W. and Chu, W. (1994).
\newblock Signalling quality by selling through a reputable retailer: An
  example of renting the reputation of another agent.
\newblock {\em Marketing Science}, 13(2):177–89.

\bibitem[Crapis et~al., 2017]{Crapis2017}
Crapis, D., Ifrach, B., Maglaras, C., and Scarsini, M. (2017).
\newblock Monopoly pricing in the presence of social learning.
\newblock {\em Management Science}, 63(11):3586--3608.

\bibitem[Debo and Veeraraghavan, 2009]{VD09}
Debo, L. and Veeraraghavan, S. (2009).
\newblock Models of herding behavior in operations management.
\newblock {\em Consumer-Driven Demand and Operations Management Models},
  Springer:81–112.

\bibitem[Foroughifar, 2020]{Foroughifar20}
Foroughifar, M. (2020).
\newblock Errors in learning from others' choices.
\newblock {\em Available at SSRN 3733115}.

\bibitem[Ganuza and Penalva, 2010]{GanuzaPenalva10}
Ganuza, J.-J. and Penalva, J.~S. (2010).
\newblock Signal orderings based on dispersion and the supply of private
  information in auctions.
\newblock {\em Econometrica}, 78(3):1007–1030.

\bibitem[Godes et~al., 2005]{SI05}
Godes, D., Mayzlin, D., Chen, Y., Das, S., Dellarocas, C., Pfeiffer, B., Libai,
  B., Sen, S., Shi, M., and Verlegh, P. (2005).
\newblock The firm's management of social interactions.
\newblock {\em Marketing Letters}, 16:415–428.

\bibitem[Gu and Xie, 2013]{GuXie13}
Gu, Z. and Xie, Y. (2013).
\newblock Facilitating fit revelation in the competitive market.
\newblock {\em Management Science}, 59(5):1196–1212.

\bibitem[Hu et~al., 2016]{Huetal16}
Hu, M., Milner, J., and Wu, J. (2016).
\newblock Liking and following and the newsvendor: Operations and marketing
  policies under social influence.
\newblock {\em Management Science}, 62(3):867--879.

\bibitem[Ifrach et~al., 2019]{Ifrachetal19}
Ifrach, B., Maglaras, C., Scarsini, M., and Zseleva, A. (2019).
\newblock Bayesian social learning from consumer reviews.
\newblock {\em Operations Research}, 67(5):1209–1221.

\bibitem[Johnson and Myatt, 2006]{JohnsonMyatt06}
Johnson, J.~P. and Myatt, D.~P. (2006).
\newblock On the simple economics of advertising, marketing, and product
  design.
\newblock {\em American Economic Review}, 96(3):756--784.

\bibitem[Kamenica and Gentzkow, 2011]{KamenicaGentzkow11}
Kamenica, E. and Gentzkow, M. (2011).
\newblock Bayesian persuasion.
\newblock {\em American Economic Review}, 101(6):2590--2615.

\bibitem[Klein and Leffler, 1981]{KleinLeffler81}
Klein, B. and Leffler, K.~B. (1981).
\newblock The role of market forces in assuring contractual performance.
\newblock {\em Journal of Political Economy}, 89(4):615–41.

\bibitem[Kuksov and Lin, 2010]{KuksovLin10}
Kuksov, D. and Lin, Y. (2010).
\newblock Information provision in a vertically differentiated competitive
  marketplace.
\newblock {\em Marketing Science}, 29(1):122–138.

\bibitem[Kuksov and Xie, 2010]{KuksovXie10}
Kuksov, D. and Xie, Y. (2010).
\newblock Pricing, frills, and customer ratings.
\newblock {\em Marketing Science}, 29(5):925–943.

\bibitem[Lewis and Sappington, 1994]{LewisSappington94}
Lewis, T.~R. and Sappington, D.~E. (1994).
\newblock Supplying information to facilitate price discrimination.
\newblock {\em International Economic Review}, 35(2):309--327.

\bibitem[Mayzlin, 2006]{Mayzlin06}
Mayzlin, D. (2006).
\newblock Promotional chat on the internet.
\newblock {\em Marketing Science}, 25(2):155--163.

\bibitem[Mayzlin and Shin, 2011]{MayzlinShin11}
Mayzlin, D. and Shin, J. (2011).
\newblock Uninformative advertising as an invitation to search.
\newblock {\em Marketing Science}, 30(4):666–685.

\bibitem[Miklos-Thal and Zhang, 2013]{Miklos13}
Miklos-Thal, J. and Zhang, J. (2013).
\newblock (de) marketing to manage consumer quality inferences.
\newblock {\em Journal of Marketing Research}, 50(1):55–69.

\bibitem[Milgrom and Roberts, 1986]{MilgromRoberts86}
Milgrom, P. and Roberts, J. (1986).
\newblock Price and advertising signals of product quality.
\newblock {\em Journal of political Economy}, 94(4):796–821.

\bibitem[Nelson, 1974]{Nelson74}
Nelson, P. (1974).
\newblock Advertising as information.
\newblock {\em Journal of Political Economy}, 82:729–754.

\bibitem[Papanastasiou and Savva, 2017]{Papanastasiou17}
Papanastasiou, Y. and Savva, N. (2017).
\newblock Dynamic pricing in the presence of social learning and strategic
  consumers.
\newblock {\em Management Science}, 63:919--939.

\bibitem[Pei and Mayzlin, 2019]{PeiMayzlin19}
Pei, A. and Mayzlin, D. (2019).
\newblock Influencing the influencers.
\newblock {\em working paper}.

\bibitem[Rosch and Mervis, 1975]{RoschMervis75}
Rosch, E. and Mervis, C.~B. (1975).
\newblock Family resemblances: Studies in the internal structure of categories.
\newblock {\em Cognitive Psychology}, 7(4):573–605.

\bibitem[Schmalensee, 1978]{Schmalensee78}
Schmalensee, R. (1978).
\newblock A model of advertising and product quality.
\newblock {\em Journal of Political Economy}, 86(3):485–503.

\bibitem[Shin, 2005]{Shin05}
Shin, J. (2005).
\newblock The role of selling costs in signaling price image.
\newblock {\em Journal of Marketing Research}, 42(3):302–312.

\bibitem[Srinivasan, 1991]{Srinivasan91}
Srinivasan, K. (1991).
\newblock Multiple market entry, cost signalling and entry deterrence.
\newblock {\em Management Science}, 37(12):1539–55.

\bibitem[{State of Online Reviews}, 2017]{Podium17}
{State of Online Reviews} (2017).
\newblock Consumers get "buy" with a little help from their friends.

\bibitem[Zhao et~al., 2013]{Zhaoetal13}
Zhao, Y., Yang, S., Narayan, V., and Zhao, Y. (2013).
\newblock Modeling consumer learning from online product reviews.
\newblock {\em Marketing Science}, 32(1):153–169.

\end{thebibliography}

\newpage
 \renewcommand{\theequation}{A.\arabic{equation}}
\setcounter{equation}{0}

\section*{Appendix A}

\subsection*{Proof of Proposition \ref{no_separating}}
Suppose there is a separating equilibrium. According to Lemma \ref{lem_price_interval} and Lemma \ref{lem_pb}, the following must hold in this equilibrium:  $v_B=\tilde{p}(B)<\tilde{p}(G)\leq 1$. Here, consumers' posteriors satisfy $\mu(\tilde{p}(G),\sigma)=1$ and $\mu(\tilde{p}(B),\sigma)=0$. Thus, all consumers buy the product. But the low quality firm has an incentive to deviate to $\tilde{p}(G)$, which is a contradiction.

\subsection*{Proof of Proposition \ref{mixed}}
The firm's profit is a weighted average of profits for sophisticated and naive consumers:

 \begin{equation}
 \Pi_H(p;h,\lambda)= \lambda \begin{cases}
  p &   \text{  if   } 0 \leq p \leq p_1\\    
  \frac{1+h}{2}p & \text{  if   } p_1 \leq p\leq p_3  \\
    \frac{h}{2}p & \text{  if   } p_3 \leq p\leq p_5  \\
      0 & otherwise 
\end{cases}
 + (1-\lambda) \begin{cases}
  p &   \text{  if   }0 \leq p \leq p_2\\    
  \frac{1+2h}{4}p & \text{  if   } p_2\leq p\leq p_4  \\
    0 & otherwise
\end{cases}
 \end{equation}
Similarly, we can derive the profit for the low quality firm. Simplifying these expressions we get,
$$\Pi_H(p;h,\lambda)=\begin{cases}
p &   \text{  if   } 0 \leq p \leq p_1 \\
p (1-\frac{1-h}{2}\lambda) &   \text{  if   } p_1 \leq p \leq p_2 \\
p (\frac{1+2h}{4}+\frac{\lambda}{4}) &   \text{  if   } p_2 \leq p \leq p_3 \\
p (\frac{1+2h}{4}-\frac{\lambda}{4}) &   \text{  if   } p_3 \leq p \leq  p_4\\
p \frac{h}{2} \lambda &   \text{  if   } p_4 \leq p \leq p_5\\
0 &   otherwise
\end{cases}
 \text{ } \text{,} \text{ }  \text{ } \Pi_L(p;h,\lambda)=\begin{cases}
p &   if \text{     } 0 \leq p \leq p_1 \\
p (1-\frac{h}{2}\lambda) &   if \text{     } p_1 \leq p \leq p_2 \\
p (\frac{3-2h}{4}+\frac{\lambda}{4}) &   if \text{     } p_2 \leq p \leq p_3 \\
p (\frac{3-2h}{4}-\frac{\lambda}{4}) &   if \text{     } p_3 \leq p \leq  p_4\\
p \frac{1-h}{2} \lambda &   if \text{     } p_4 \leq p \leq p_5\\
0 &   otherwise
\end{cases}$$
 \\
 where $p_1=1-h+h v_B$, $p_2=\frac{3-2h}{4}+\frac{1+2h}{4}v_B$, $p_3=\frac{1+v_B}{2}$, $p_4=\frac{1+2h}{4}+\frac{3-2h}{4}v_B$, and $p_5=h+(1-h)v_B$. There are two parameters in the profit function. Thus, equilibrium outcomes depend on both $\lambda$ and $h$. We fix $\lambda$ and solve for the equilibrium as a function of $h$, given $\lambda$. Define $\hat{h}_1(\lambda)$, $\hat{h}_2(\lambda)$, and $\hat{h}_3(\lambda)$ as follows:
\begin{equation}
\label{omega_hat1}
    \hat{h}_1(\lambda) = max \ \Big\{h \in [0.5,1] \ \Big| \  p_1 \in \underset{p}{argmax} \  \Pi_H(p;h,\lambda)\Big\}
\end{equation}
\begin{equation}
\label{omega_hat2}
    \hat{h}_2(\lambda) = max \ \Big\{h \in [0.5,1] \ \Big| \  p_2 \in \underset{p}{argmax} \  \Pi_H(p;h,\lambda)\Big\}
\end{equation}
\begin{equation}
\label{omega_hat3}
    \hat{h}_3(\lambda) = max \ \Big\{h \in [0.5,1] \ \Big| \  p_3 \in \underset{p}{argmax} \  \Pi_H(p;h,\lambda)\Big\}
\end{equation}

Now, define $\hat{\lambda}_1$, $\hat{\lambda}_2$, and $\hat{\lambda}_3$ as follows,
\begin{equation}
\label{lambda_hat1}
    \hat{\lambda}_1 = min \ \Big\{\lambda \in (0,1) \ \Big| \ \hat{h}_2(\lambda) \neq \O \ , \  \lambda \in \underset{\lambda}{argmax} \ \hat{h}_2(\lambda) \Big\}
\end{equation}
\begin{equation}
\label{lambda_hat2}
    \hat{\lambda}_2 = min \ \Big\{\lambda \in (0,1) \ \Big| \ \hat{h}_3(\lambda) \neq \O \ , \  \lambda \in \underset{\lambda}{argmax} \ \hat{h}_3(\lambda) \Big\}
\end{equation}
\begin{equation}
\label{lambda_hat3}
    \hat{\lambda}_3 = min \ \Big\{\lambda \in (0,1) \ \Big| \ \hat{h}_1(\lambda) \neq \O \ , \  \lambda \in \underset{\lambda}{argmax} \ \hat{h}_1(\lambda) \Big\}
\end{equation}
Solving for the equilibrium, we have:
\\
1. If $0 \leq \lambda \leq \hat{\lambda}_1$ the equilibrium price is as follows:
\begin{equation}
    \tilde{p} = \begin{cases}
    p_1 & \text{ if } \frac{1}{2} \leq \hat{h}_1(\lambda) \\
    p_2 & \text{ if } \hat{h}_1(\lambda) \leq \hat{h}_2(\lambda)\\
     p_4  & \text{ if } \hat{h}_2(\lambda) \leq 1
    \end{cases}
\end{equation}
and equilibrium profits are:
\begin{equation}
    \tilde{\Pi}_H(h;\lambda)=\begin{cases}
    p_1 & \text{ if } \frac{1}{2} \leq \hat{h}_1(\lambda) \\
    p_2(1-\frac{1-h}{2}\lambda) & \text{ if } \hat{h}_1(\lambda) \leq \hat{h}_2(\lambda)\\
     p_4 (\frac{1+2h}{4}-\frac{\lambda}{4}) & \text{ if } \hat{h}_2(\lambda) \leq 1
    \end{cases}
\ \text{ , } \ 
    \tilde{\Pi}_L(h;\lambda)=\begin{cases}
    p_1 & \text{ if } \frac{1}{2} \leq \hat{h}_1(\lambda) \\
    p_2(1-\frac{h}{2}\lambda) & \text{ if } \hat{h}_1(\lambda) \leq \hat{h}_2(\lambda)\\
     p_4 (\frac{3-2h}{4}-\frac{\lambda}{4}) & \text{ if } \hat{h}_2(\lambda) \leq 1
    \end{cases}
\end{equation}

2. If $\hat{\lambda}_1 \leq \lambda \leq \hat{\lambda}_2$ the equilibrium price is as follows:
\begin{equation}
    \tilde{p} = \begin{cases}
    p_1 & \text{ if } \frac{1}{2} \leq \hat{h}_1(\lambda) \\
    p_2 & \text{ if } \hat{h}_1(\lambda) \leq \hat{h}_2(\lambda)\\
    p_3  & \text{ if } \hat{h}_2(\lambda) \leq \hat{h}_3(\lambda)\\
     p_4  & \text{ if } \hat{h}_3(\lambda) \leq 1
    \end{cases}
\end{equation}
and equilibrium profits are:
\begin{equation}
    \tilde{\Pi}_H(h;\lambda)=\begin{cases}
    p_1 & \text{ if } \frac{1}{2} \leq \hat{h}_1(\lambda) \\
    p_2(1-\frac{1-h}{2}\lambda) & \text{ if } \hat{h}_1(\lambda) \leq \hat{h}_2(\lambda)\\
    p_3 (\frac{1+2h}{4}+\frac{\lambda}{4}) & \text{ if } \hat{h}_2(\lambda) \leq \hat{h}_3(\lambda)\\
     p_4 (\frac{1+2h}{4}-\frac{\lambda}{4}) & \text{ if } \hat{h}_3(\lambda) \leq 1
    \end{cases}
\ \text{ , } \ 
    \tilde{\Pi}_L(h;\lambda)=\begin{cases}
    p_1 & \text{ if } \frac{1}{2} \leq \hat{h}_1(\lambda) \\
    p_2(1-\frac{h}{2}\lambda) & \text{ if } \hat{h}_1(\lambda) \leq \hat{h}_2(\lambda)\\
    p_3 (\frac{3-2h}{4}+\frac{\lambda}{4}) & \text{ if } \hat{h}_2(\lambda) \leq \hat{h}_3(\lambda)\\
     p_4 (\frac{3-2h}{4}-\frac{\lambda}{4}) & \text{ if } \hat{h}_3(\lambda) \leq 1
    \end{cases}
\end{equation}

3. If $\hat{\lambda}_2 \leq \lambda \leq \hat{\lambda}_3$ the equilibrium price is as follows:
\begin{equation}
    \tilde{p} = \begin{cases}
    p_1 & \text{ if } \frac{1}{2} \leq \hat{h}_1(\lambda) \\
    p_2 & \text{ if } \hat{h}_1(\lambda) \leq \hat{h}_2(\lambda)\\
     p_3  & \text{ if } \hat{h}_2(\lambda) \leq 1
    \end{cases}
\end{equation}
and equilibrium profits are:
\begin{equation}
    \tilde{\Pi}_H(h;\lambda)=\begin{cases}
    p_1 & \text{ if } \frac{1}{2} \leq \hat{h}_1(\lambda) \\
    p_2(1-\frac{1-h}{2}\lambda) & \text{ if } \hat{h}_1(\lambda) \leq \hat{h}_2(\lambda)\\
     p_3 (\frac{1+2h}{4}+\frac{\lambda}{4}) & \text{ if } \hat{h}_2(\lambda) \leq 1
    \end{cases}
\ \text{ , } \ 
    \tilde{\Pi}_L(h;\lambda)=\begin{cases}
    p_1 & \text{ if } \frac{1}{2} \leq \hat{h}_1(\lambda) \\
    p_2(1-\frac{h}{2}\lambda) & \text{ if } \hat{h}_1(\lambda) \leq \hat{h}_2(\lambda)\\
     p_3 (\frac{3-2h}{4}+\frac{\lambda}{4}) & \text{ if } \hat{h}_2(\lambda) \leq 1
    \end{cases}
\end{equation}
\\
4. If $\hat{\lambda}_3 \leq \lambda \leq 1$ the equilibrium price is as follows:
\begin{equation}
    \tilde{p} = \begin{cases}
    p_1 & \text{ if } \frac{1}{2} \leq \hat{h}_1(\lambda) \\
    p_3  & \text{ if } \hat{h}_1(\lambda) \leq 1
    \end{cases}
\end{equation}
and equilibrium profits are:
\begin{equation}
    \tilde{\Pi}_H(h;\lambda)=\begin{cases}
    p_1 & \text{ if } \frac{1}{2} \leq \hat{h}_1(\lambda) \\
    p_3 (\frac{1+2h}{4}+\frac{\lambda}{4}) & \text{ if } \hat{h}_1(\lambda) \leq 1
    \end{cases}
\ \text{ , } \ 
    \tilde{\Pi}_L(h;\lambda)=\begin{cases}
    p_1 & \text{ if } \frac{1}{2} \leq \hat{h}_1(\lambda) \\
    p_3 (\frac{3-2h}{4}+\frac{\lambda}{4}) & \text{ if } \hat{h}_1(\lambda) \leq 1
\end{cases}
\end{equation}
The final step is to complete the proof is to define $h^*(\lambda)$ and $\bar{v}$. Define,
\begin{align}
    h^*(\lambda) = 
    \begin{cases}
    \hat{h}_2(\lambda) & \text{if} \ \text{ } 0 \leq \lambda \leq \hat{\lambda}_3\\
    \hat{h}_1(\lambda) & \text{if} \ \text{ } \hat{\lambda}_3 \leq \lambda \leq 1
    \end{cases}
\end{align}
In all cases above, given $\lambda$, the equilibrium profit of the high quality firm is decreasing in $h$, when $h<h^*(\lambda)$, and it is increasing when $h>h^*(\lambda)$. 

For the low quality firm, we need to make sure that the firm does not deviate to $p=v_B$. Define $\bar{v}=\{v_B \in (0,1) \ \big| \  \underset{\lambda}{min} \ \tilde{\Pi}_L(1,\lambda)=v_B \}$. As long as, $v_B\leq \bar{v}$, by substituting the equilibrium price into the low quality firm's profit function, it is straightforward to show that for all $h \in [0.5,1]$, the profit is decreasing in $h$, given $\lambda$.

\subsection*{Proof of Corollary \ref{th_n>th_i}}
Based on the proof of Proposition \ref{mixed}, we have $h^*(0) = \frac{-3+v_B+\sqrt{(3-v_B)^2+(1-v_B)(11+v_B)}}{2(1-v_B)}$ and $h^*(1)=\frac{3-v_B}{5-v_B}$. This implies $h^*(1) < h^*(0)$ for $v_B \in (0,1)$.

\subsection*{Proof of Corollary \ref{prefer_firm}}
Denote the profit functions in a sophisticated market and a naive market by superscripts $S$ and $N$, respectively. For the low quality firm, based on the proof of Proposition \ref{mixed}, we have $\tilde{\Pi}^S_L(h)<\tilde{\Pi}^N_L(h)$ when $h < h^*(0)$, and $\tilde{\Pi}^S_L(h)\geq \tilde{\Pi}^N_L(h)$ when $h > h^*(0)$. For the high quality firm, however, there are more steps. When $h < h^*(1)$ it is obvious that $\tilde{\Pi}^S_H(h)<\tilde{\Pi}^N_H(h)$. Define $\underline{h}=\frac{2}{3-v_B}<h^*(0)$ and $\bar{h} = max \ \Big\{h \in [0.5,1]\ \Big|\ 4(1+h)(1+v_B) \geq (1+2h)(1+2h+v_B(3-2h))\Big\}>h^*(0)$. Then, when $h \in (\underline{h},\bar{h})$, we have $\tilde{\Pi}^N_H(h)\leq \tilde{\Pi}^S_H(h)$. Alternatively, when $h \in (0.5,\underline{h}) \cup (\bar{h},1)$, $\tilde{\Pi}^N_H(h)\geq \tilde{\Pi}^S_H(h)$.

\subsection*{Proof of Proposition \ref{lambda}}

Here, we build upon the proof of Proposition \ref{mixed}. In all cases, the profit of both low quality and high quality firms are independent of $\lambda$ when $h < \hat{h}_1$. So, we just examine situations where $h \geq \hat{h}_1$. In Proposition \ref{mixed} we had 4 cases:

In case 1, the profits are decreasing in $\lambda$ when $h \geq \hat{h}_1$. In case 2, the profits are decreasing in $\lambda$ when $h \in [\hat h_1, \hat h_2] \cup [\hat{h}_3,1]$, and are increasing in $\lambda$ when $h \in (\hat{h}_2,\hat{h}_3)$. In case 3, the profits decrease in $\lambda$ when $\hat h_1 \leq h \leq \hat{h}_2$, and increase in $\lambda$ when $h>\hat{h}_2$. In case 4, the profits increase in $\lambda$ for $h \geq \hat{h}_1$. Define $\hat{h}^{-1}_i$ as the inverse of $\hat{h}_i$, $i \in \{1,2,3\}$, where $\hat{h}_i$ is defined in \eqref{omega_hat1} - \eqref{omega_hat3}. Then, $\bar{\lambda}$ is as follows:
\begin{align}
    \bar{\lambda}(h) =\begin{cases}
        \hat{h}_1^{-1}(h) & \text{ if } \ 0.5 \leq h\leq \hat{h}_1(\hat{\lambda}_3) \\
        \hat{h}_2^{-1}(h) & \text{ if } \ \hat{h}_1(\hat{\lambda}_3) \leq h\leq \hat{h}_2(\hat{\lambda}_1) \\
        \hat{h}_3^{-1}(h) & \text{ if } \ \hat{h}_2(\hat{\lambda}_1) \leq h\leq \hat{h}_3(\hat{\lambda}_2)
    \end{cases}
\end{align}
where $\hat{\lambda}_i$, $i \in \{1,2,3\}$, is defined earlier in \eqref{lambda_hat1} - \eqref{lambda_hat3}.

\subsection*{Proof of Proposition \ref{prop:mixed_naive}}

Substituting $\bar{p}$ and \eqref{eqm_beliefs} in \eqref{indifference} we get,
 \begin{equation}
 \label{indifference_simplified}
     \frac{3-2h}{4}\Big[(\frac{\mu(\sigma_g)}{\mu(\sigma_g)+(1-\mu(\sigma_g))\alpha})(1-v_B)+v_B\Big]=v_B
 \end{equation}
 where $\mu(\sigma_g) = Pr(G|\sigma_g)=\frac{1+2h}{4}$. Solving for $\alpha$, equation \eqref{indifference_simplified} gives the equilibrium mixing probability of the low quality firm, $\alpha^*(h,v_B)=\frac{1}{v_B}-\frac{4}{3-2h}$. Feasibility requires that $\alpha^* \in (0,1)$, which is the case when $v_B \in (\frac{3-2h}{7-2h},\frac{3-2h}{4})$.\footnote{For instance, when $h$ is 1, there is a mixed-strategy equilibrium specified as above, for $v_B \in (0.2,0.25)$.}

\subsection*{Proof of Proposition \ref{prop:unconditional-naive}}
The firm's expected profit, given $Q$, is as follows:
 \begin{equation}
 \Pi_H(p;h,\gamma)= \begin{cases}
  p &   \text{  if   }  0 \leq p \leq p_1\\    
 \frac{1+\gamma(2h-1)}{2}p & \text{  if   } p_1\leq p\leq p_2  \\
    0 & otherwise
\end{cases}
\text{    ,     }
  \Pi_L(p;h,\gamma)= \begin{cases}
  p &   \text{  if   } 0 \leq p \leq p_1\\    
 \frac{1-\gamma(2h-1)}{2}p & \text{  if   } p_1\leq p\leq p_2  \\
    0 & otherwise
    \end{cases}
 \end{equation}
where $p_1 = 1-\bar{w}(1-v_B)=1-\bar{w}$ and $p_2=v_B+\bar{w}(1-v_B)=\bar{w}$, and $\bar{w}=\frac{1+\gamma(2h-1)}{2}$. First of all, the low quality firm's profit is always decreasing in $\gamma$, because both $p_1$ and $\frac{1-\gamma(2h-1)}{2}p_2$ are decreasing in $\gamma$. This proves part (a) of the proposition. 

Second of all, define $\underline{h}=max \ \{h \in (0.5,1) \ \big| \  1-h>h^2 \}={-1+\sqrt{5} \over 2}$ and $\bar{h}=min \ \{h \in (0.5,1) \ \big| \  {{3-2h} \over 4}<({{1+2h} \over 4})^2 \}=-1.5+\sqrt{5}$. When $h \in (0.5,\underline{h})$, the equilibrium price is $\tilde{p}=p_1$. In this case all consumers buy. The high quality firm does not deviate because $p_1>\bar{w}p_2>v_B=0$. The low quality firm also does not deviate because $p_1>(1-\bar{w})p_2>v_B=0$. Here, the high quality firm's profit is decreasing in $\gamma$ for all $\gamma \in (0,1)$. 

When $h \in (\bar{h},1)$, the equilibrium price is $\tilde{p}=p_2$. Here, only consumers who obtain a good signal buy. The high quality firm does not deviate because $\bar{w}p_2>p_1>v_B=0$. The low quality firm also does not deviate because $(1-\bar{w})p_2>v_B=0$. In this case, the high quality firm's profit is increasing in $\gamma$  for all $\gamma \in (0,1)$.

Finally, let's assume $h \in (\underline{h},\bar{h})$. Define $\hat{\gamma} = \{\gamma \in (0,1) \  \big| \ p_1 = \bar{w}p_2\}$. Here, when $\gamma \in (0,\hat{\gamma})$ the optimal price is $p_1$ and when $\gamma \in (\hat{\gamma},1)$ the optimal price is $p_2$. Thus, the equilibrium price can be summarized as follows:
 \begin{equation}
 \tilde{p}=\begin{cases}
  p_1 &   \text{  if   } \gamma \in (0,\hat{\gamma}) \\    
   p_2 &  \text{  if   } \gamma \in (\hat{\gamma},1) 
\end{cases}
 \end{equation}
 The equilibrium profits are then as follows:
  \begin{equation}
 \tilde{\Pi}_H(\gamma;h)=\begin{cases}
  p_1 &   \text{  if   } \gamma \in  (0,\hat{\gamma}) \\    
   \bar{w}p_2 &  \text{  if   }  \gamma \in (\hat{\gamma},1)
\end{cases}
\text{ } \text{    ,     } \text{ }
\tilde{\Pi}_L(\gamma;h)=\begin{cases}
  p_1 &   \text{  if   } \gamma \in  (0,\hat{\gamma}) \\    
    (1-\bar{w})p_2 &  \text{  if   }  \gamma \in (\hat{\gamma},1)
\end{cases}
\end{equation}
This implies that $\frac{\partial \tilde{\Pi}_H}{\partial \gamma}<0$  for  $\gamma<\hat{\gamma}$, and $\frac{\partial \tilde{\Pi}_H}{\partial \gamma}>0 $ for  $\gamma>\hat{\gamma}$. In addition, $\frac{\partial \tilde{\Pi}_L}{\partial \gamma}<0$ for all $\gamma$.

\subsection*{Proof of Proposition \ref{prior}}

We need to look for pooling equilibria, $\tilde{p}(G)=\tilde{p}(B)=\tilde{p}$, such that consumers' beliefs are $\mu(\sigma,p)=Pr(G|\sigma)$. Let's assume consumers are naive. The firm knows the product quality and consumers type. Thus, its expected profit, given $Q$, is as follows:\\
 \begin{equation}
 \Pi_H(p;h,\mu_0)= \begin{cases}
  p &   \text{  if   }  p \leq \underline{p}\\    
  \frac{1+2h}{4}p & \text{  if   } \underline{p}\leq p\leq \bar{p}  \\
    0 & otherwise
\end{cases}
\text{    ,     }
  \Pi_L(p;h,\mu_0)= \begin{cases}
  p &   \text{  if   } p \leq \underline{p}\\    
  \frac{3-2h}{4}p & \text{  if   } \underline{p}\leq p\leq \bar{p}  \\
    0 & otherwise
    \end{cases}
 \end{equation}
\\
where $\underline{p}=\frac{\mu_0(3-2h)+(1-\mu_0)(1+2h)v_B}{\mu_0(3-2h)+(1-\mu_0)(1+2h)}$ and $\bar{p}=\frac{\mu_0(1+2h)+(1-\mu_0)(3-2h)v_B}{\mu_0(1+2h)+(1-\mu_0)(3-2h)}$ are the two potentially optimal prices. If $p = \underline{p}$ the consumer will buy the product regardless of her signal. But, if $p= \bar{p}$ she will buy the product only if she receives a good signal, $\sigma_{g}$. Define $h^*(\mu_0)=\{h \in (0.5,1) \text{ } \ \big| \ \text{ }\underline{p}=\frac{(1+2h)}{4}\bar{p}\}$. When $0.5<h<h^*(\mu_0)$, the equilibrium price is at $p =\underline{p}$, in which all consumers buy. If a firm deviates to $p=\bar{p}$, although it sells at a higher price, only consumers who obtain a good signal buy the product. It is then straightforward to show that this leads to a lower profit. Thus, no firm has incentives to deviate from $p = \underline{p}$. 

When $h^*(\mu_0)<h<1$, the equilibrium price is at $p = \bar{p}$. The high quality firm does not have incentives to deviate, because $\frac{(1+2h)}{4}\bar{p}>\underline{p}$. 
 Therefore, the equilibrium price can be summarized as follows:
 \begin{equation}
 \tilde{p}=\begin{cases}
  \underline{p} &   \text{  if   } h\in (0.5,h^*(\mu_0)) \\    
   \bar{p} &  \text{  if   }h\in (h^*(\mu_0),1) 
\end{cases}
 \end{equation}
and the equilibrium profits are:
 \begin{equation}
 \tilde{\Pi}_H(h;\mu_0)=\begin{cases}
  \underline{p} &   \text{  if   } h\in (0.5,h^*(\mu_0)) \\    
    \frac{(1+2h)}{4}\bar{p} &  \text{  if   }h\in (h^*(\mu_0),1)
\end{cases}
\text{   ,  }
\tilde{\Pi}_L(h;\mu_0)=\begin{cases}
  \underline{p} &   \text{  if   }h\in (0.5,h^*(\mu_0)) \\    
    \frac{(3-2h)}{4}\bar{p} &  \text{  if   }h\in (h^*(\mu_0),1)
\end{cases}
\end{equation}
This implies that $\frac{\partial \tilde{\Pi}_H}{\partial h}\leq 0$  for  $h<h^*(\mu_0)$, and $\frac{\partial \tilde{\Pi}_H}{\partial h}\geq 0 $ for  $h>h^*(\mu_0)$. 

For the low quality firm, we need to make sure that the firm does not deviate to $p=v_B$. Define, $\underline{\mu}=min \ \{\mu_0 \in (0,0.5)\text{ }|\text{ }\forall \text{ } h \in (h^*(\mu_0),1): \frac{\partial}{\partial h}(\frac{(3-2h)\bar{p}}{4})<0\}$, and $\bar{v}=\{v_B \in (0,1) \ \big| \  \underset{\mu_0 \in (\underline{\mu},1)}{min} \ \tilde{\Pi}_L(1,\mu_0)=v_B \}$. As long as, $v_B\leq \bar{v}$ and $\mu_0>\underline{\mu}$, by substituting the equilibrium price into the low quality firm's profit function, one can easily conclude that the profit is decreasing in $h$ for all $h \in [0.5,1]$, given $\lambda$.
\\

 \newpage
 
\renewcommand{\theequation}{B.\arabic{equation}}
\renewcommand{\theprop}{B.\arabic{prop}}
\renewcommand{\thefigure}{B.\arabic{figure}}
\setcounter{equation}{0}
\setcounter{prop}{0}
\setcounter{figure}{0}

\end{document}